\documentclass[11pt, a4paper]{article}
\usepackage[T1]{fontenc}
\usepackage{amsfonts}
\usepackage{amsmath}
\usepackage{amssymb}
\usepackage{amsthm}
\usepackage{bbm}
\usepackage{bm}
\usepackage{mathrsfs}
\usepackage{verbatim}
\usepackage{setspace}
\usepackage{color}
\usepackage{pdfsync}
\usepackage{enumitem}

\theoremstyle{plain}
\newtheorem{theorem}{Theorem}[section]
\newtheorem{proposition}[theorem]{Proposition}
\newtheorem{lemma}[theorem]{Lemma}
\newtheorem{corollary}[theorem]{Corollary}

\theoremstyle{definition}

\newtheorem{remark}[theorem]{Remark}

\newtheorem{example}[theorem]{Example}

\newtheorem*{conditionA}{Condition~(A)}
\theoremstyle{remark}

\renewenvironment{thebibliography}[1]{%
\begin{oldthebibliography}{#1}%
\setlength{\baselineskip}{.9em}
\linespread{1}%
\small
\setlength{\parskip}{0ex}%
\setlength{\itemsep}{.2em}%
}%
{%
\end{oldthebibliography}%
}
\newcommand{\q}{\quad}

\newcommand{\F}{\mathbb{F}}
\newcommand{\G}{\mathbb{G}}

\newcommand{\N}{\mathbb{N}}

\newcommand{\R}{\mathbb{R}}
\renewcommand{\S}{\mathbb{S}}

\newcommand{\cA}{\mathcal{A}}
\newcommand{\cB}{\mathcal{B}}

\newcommand{\cE}{\mathcal{E}}
\newcommand{\cF}{\mathcal{F}}
\newcommand{\cG}{\mathcal{G}}
\newcommand{\cH}{\mathcal{H}}

\newcommand{\cP}{\mathcal{P}}

\newcommand{\bD}{\mathbf{D}}

\DeclareMathOperator{\esssup}{ess\, sup}

\newcommand{\as}{\mbox{-a.s.}}

\newcommand{\1}{\mathbf{1}}

\newcommand{\br}[1]{\langle #1 \rangle}

\newcommand{\tomega}{\tilde{\omega}}
\newcommand{\bomega}{\bar{\omega}}

\newcommand{\fPO}{\mathfrak{P}(\Omega)}
\newcommand{\fMa}{\mathfrak{M}_a}

\numberwithin{equation}{section}

\usepackage[pdfborder={0 0 0}]{hyperref}
\hypersetup{
  urlcolor = black,
  pdfauthor = {Ariel Neufeld, Marcel Nutz},
  pdfkeywords = {Volatility uncertainty; Superreplication; Nonlinear expectation},
  pdftitle = {Superreplication under Volatility Uncertainty for Measurable Claims},
  pdfsubject = {Superreplication under Volatility Uncertainty for Measurable Claims},
  pdfpagemode = UseNone
}

\begin{document}

\title{\vspace{-0em}
Superreplication under Volatility Uncertainty\\for Measurable Claims
\date{\today}
\author{
  Ariel Neufeld%
  \thanks{
  Dept.\ of Mathematics, ETH Zurich, \texttt{ariel.neufeld@math.ethz.ch}.
  A.~Neufeld gratefully acknowledges financial support by Swiss National Science Foundation Grant PDFMP2-137147/1.
  }
  \and
  Marcel Nutz%
  \thanks{
  Dept.\ of Mathematics, Columbia University, New York, \texttt{mnutz@math.columbia.edu}. Part of this research was carried out while M.~Nutz was visiting the Forschungsinstitut f\"ur Mathematik at ETH Zurich, and he would like to thank Martin Schweizer, Mete Soner and Josef Teichmann for their hospitality. He is also indebted to Bruno Bouchard and Jianfeng Zhang for fruitful discussions. Financial support by NSF Grant DMS-1208985 is gratefully acknowledged.
  }
 }
}
\maketitle \vspace{-1em}

\begin{abstract}
  We establish the duality formula for the superreplication price in a setting of volatility uncertainty which includes the example of ``random $G$-expectation.'' In contrast to previous results, the contingent claim is not assumed to be quasi-continuous.
\end{abstract}

\vspace{.9em}

{\small
\noindent \emph{Keywords} Volatility uncertainty; Superreplication; Nonlinear expectation

\noindent \emph{AMS 2000 Subject Classification}
93E20; %
91B30;   %
91B28 %
}

\section{Introduction}\label{se:intro}

This paper is concerned with superreplication-pricing in a setting of volatility uncertainty. We see the canonical process $B$ on the space $\Omega$ of continuous paths as the stock price process and formalize this uncertainty via a set $\cP$ of (non-equivalent) martingale laws on $\Omega$. Given a contingent claim $\xi$ measurable at time $T>0$, we are interested in determining the minimal initial capital $x\in\R$ for which there exists a trading strategy $H$ whose terminal gain $x+\int_0^T H_u\,dB_u$ exceeds $\xi$ $P$-a.s., simultaneously for all $P\in\cP$. The aim is to show that under suitable assumptions, this minimal capital is given by $x=\sup_{P\in\cP} E^P[\xi]$. We prove this duality formula for  Borel-measurable (and, more generally, upper semianalytic) claims $\xi$ and a model $\cP$ where the possible values of the volatility are determined by a set-valued process. Such a model of a ``random $G$-expectation'' was first introduced in \cite{Nutz.10Gexp}, as an extension of the ``$G$-expectation'' of~\cite{Peng.07, Peng.08}.

The duality formula under volatility uncertainty has been established for several cases and through different approaches:
\cite{DenisMartini.06} used ideas from capacity theory, \cite{Peng.10, SonerTouziZhang.2010rep, Song.10} used an approximation by Markovian control problems, and \cite{SonerTouziZhang.2010dual, NutzSoner.10} used a method discussed below. See also~\cite{PossamaiRoyerTouzi.13} for a follow-up on our results, related to optimal martingale transportation.
The main difference between our results and the previous ones is that we do not impose continuity assumptions on the claim $\xi$ (as a functional of the stock price). Thus, on the one hand, our result extends the duality formula to traded claims such as digital options or options on realized variance, which are not quasi-continuous (cf.\ \cite{DenisMartini.06}), and cases where the regularity is not known, like an American option evaluated at an optimal exercise time (cf.\ \cite{NutzZhang.12}). On the other hand, our result confirms the general robustness of the duality.

The main difficulty in our endeavor is to construct the superreplicating strategy $H$. We adopt the approach of~\cite{SonerTouziZhang.2010dual} and~\cite{NutzSoner.10}, which can be outlined as follows:

\begin{enumerate}
  \item Construct the conditional (nonlinear) expectation $\cE_t(\xi)$ related to $\cP$ and show the tower property $\cE_s(\cE_t(\xi))=\cE_s(\xi)$ for $s\leq t$.
  \item Check that the right limit $Y_t:=\cE_{t+}(\xi)$ exists and defines a supermartingale under each $P\in\cP$ (in a suitable filtration).
  \item For every $P\in\cP$, show that the martingale part in the Doob--Meyer decomposition of $Y$ is of the form $\int H^P\,dB$. Using that $H^P$ can be expressed via the quadratic (co)variation processes of $Y$ and $B$, deduce that there exists a universal process $H$ coinciding with $H^P$ under each $P$, and check that $H$ is the desired strategy.
\end{enumerate}

Step~(iii) can be accomplished by ensuring that each $P\in\cP$ has the predictable representation property. To this end---and for some details of Step~(ii) that  we shall skip for the moment---\cite{SonerTouziZhang.2010dual} introduced the set of Brownian martingale laws with positive volatility, which we shall denote by $\cP_S$: if $\cP$ is chosen as a subset of $\cP_S$, then every $P\in\cP$ has the representation property (cf.\ Lemma~\ref{le:MRPandVersions}) and Step~(iii) is feasible.

Step~(i) is the main reason why previous results required continuity assumptions on $\xi$. Recently, it was shown in~\cite{NutzVanHandel.12} that the theory of analytic sets can be used to carry out Step~(i) when $\xi$ is merely Borel-measurable (or only upper semianalytic), provided that the set of measures satisfies a condition of measurability and invariance, called Condition~(A) below (cf.\ Proposition~\ref{pr:NvH}). It was also shown in \cite{NutzVanHandel.12} that this condition is satisfied for a model of random $G$-expectation where the measures are chosen from the set of all (not necessarily Brownian) martingale laws. Thus, to follow the approach outlined above, we formulate a similar model using elements of $\cP_S$ and show that Condition~(A) is again satisfied.
This essentially boils down to proving that the set $\cP_S$ itself satisfies Condition~(A), which is our main technical result (Theorem~\ref{th:PSsatisfiesA}). Using this fact, we can go through the approach outlined above and establish our duality result (Theorem~\ref{th:duality} and Corollary~\ref{co:randomG}). As an aside of independent interest, Theorem~\ref{th:PSsatisfiesA} yields a rigorous proof for a dynamic programming principle with a fairly general reward functional (cf.\ Remark~\ref{rk:DPPforControl}).

The remainder of this paper is organized as follows. In Section~\ref{se:results}, we first describe our setup and notation in detail and we recall the relevant facts from \cite{NutzVanHandel.12}; then, we state our main results. Theorem~\ref{th:PSsatisfiesA} is proved in Section~\ref{se:proofPSsatisfiesA}, and Section~\ref{se:proofDuality} concludes with the proof of Theorem~\ref{th:duality}.

\section{Results}\label{se:results}

\subsection{Notation}

We fix the dimension $d\in\N$ and let $\Omega=\{\omega \in C([0,\infty);\R^d):\, \omega_0=0\}$ be the canonical space of continuous paths equipped with the
topology of locally uniform convergence. Moreover, let $\cF= \mathcal{B}(\Omega)$ be its Borel $\sigma$-algebra.
We denote by $B:=(B_t)_{t \geq 0}$ the canonical process $B_t(\omega)=\omega_t$, by $P_0$ the Wiener measure and by $\mathbb{F}:=(\cF_t)_{t \geq 0}$
the (raw) filtration generated by $B$. Furthermore, we denote by $\mathfrak{P}(\Omega)$ the set of all probability measures on $\Omega$, equipped with the topology of weak convergence.

We recall that a subset of a Polish space is called analytic if it is the image of a Borel subset of another Polish space under a Borel map. Moreover, an $\overline{\R}$-valued function $f$ is called upper semianalytic if $\{f>c\}$ is analytic for each $c\in\R$; in particular, any Borel-measurable function is upper semianalytic.  (See \cite[Chapter~7]{BertsekasShreve.78} for background.) Finally, the universal completion of a $\sigma$-field $\cA$ is given by
$\cA^{*}:=\cap_{P} \cA^{P}$, where $\cA^{P}$ denotes the completion with respect to $P$ and the intersection is taken over all probability measures on~$\cA$.

Throughout this paper, ``stopping time'' will refer to a finite $\F$-stopping time.  Let $\tau$ be a stopping time. Then
the concatenation of $\omega, \tomega\in \Omega$ at $\tau$ is the path
\[
 (\omega\otimes_\tau \tomega)_u := \omega_u \1_{[0,\tau(\omega))}(u) + \big(\omega_{\tau(\omega)} + \tomega_{u-\tau(\omega)}\big) \1_{[\tau(\omega), \infty)}(u),\quad u\geq 0.
\]
For any probability measure $P\in\fPO$, there is a regular conditional
probability distribution $\{P^\omega_\tau\}_{\omega\in\Omega}$
given $\cF_\tau$ satisfying
\[
  P^\omega_\tau\big\{\omega'\in \Omega: \omega' = \omega \mbox{ on } [0,\tau(\omega)]\big\} = 1\quad\mbox{for all}\quad\omega\in\Omega;
\]
cf.\ \cite[p.\,34]{StroockVaradhan.79}. We then define $P^{\tau,\omega}\in \fPO$ by
\[
  P^{\tau,\omega}(A):=P^\omega_\tau(\omega\otimes_\tau A),\quad A\in \cF, \quad\mbox{where }\omega\otimes_\tau A:=\{\omega\otimes_\tau \tomega:\, \tomega\in A\}.
\]
Given a function $\xi$ on $\Omega$ and $\omega\in\Omega$,
we also define the function $\xi^{\tau,\omega}$ on $\Omega$ by
\[
  \xi^{\tau,\omega}(\tomega) :=\xi(\omega\otimes_\tau \tomega),\q \tomega\in\Omega.
\]
Then, we have $E^{P^{\tau,\omega}}[\xi^{\tau,\omega}]=E^P[\xi|\cF_\tau](\omega)$ for $P$-a.e.\ $\omega\in\Omega$.

\subsection{Preliminaries}

We formalize volatility uncertainty via a set of local martingale laws with different volatilities. To this end, we denote by $\S$ the set of all symmetric $d\times d$-matrices and by $\S^{>0}$ its subset of strictly positive definite matrices. The set $\cP_S\subset \fPO$ consists of all local martingale laws of the form
\begin{equation}\label{eq:defPS}
  P^\alpha= P_0 \circ \bigg(\int_0^\cdot \alpha^{1/2}_s\, dB_s\bigg)^{-1},
\end{equation}
where $\alpha$ ranges over all $\F$-progressively measurable processes with values in $\mathbb{S}^{>0}$ satisfying
$\int_0^T |\alpha_s|\, ds <\infty$ $P_0$-a.s. for every $T\in\R_+$. (We denote by $|\cdot|$ the Euclidean norm in any dimension.) In other words, these are all laws of stochastic integrals of a Brownian motion, where the integrand is strictly positive and adapted to the Brownian motion. The set $\cP_S$ was introduced in~\cite{SonerTouziZhang.2010dual} and its elements have several nice properties; in particular, they have the predictable representation property which plays an important role in the proof of the duality result below (see also Section~\ref{se:proofDuality}).

We intend to model ``uncertainty'' via a subset $\cP\subset\fPO$ (below, each $P\in\cP$ will be a possible scenario for the volatility). However, for technical reasons, we make a detour and consider an entire family of subsets of $\fPO$, indexed by $(s,\omega)\in \R_+\times\Omega$, whose elements at $s=0$ coincide with $\cP$. As illustrated in Example~\ref{ex:randomG} below, this family is of purely auxiliary nature.

Let $\{\cP(s,\omega)\}_{(s,\omega)\in \R_+\times\Omega}$ be a family of subsets of $\fPO$, adapted in the sense that
\[
  \cP(s,\omega)=\cP(s,\tomega)\quad\mbox{if}\quad \omega|_{[0,s]}=\tomega|_{[0,s]},
\]
and define $\cP(\tau,\omega):=\cP(\tau(\omega),\omega)$ for any stopping time $\tau$.
Note that the set $\cP(0,\omega)$ is independent of $\omega$ as all paths start at the origin. Thus, we can define $\cP:=\cP(0,\omega)$. Before giving the example, let us state a condition on $\{\cP(s,\omega)\}$ whose purpose will be to construct the conditional sublinear expectation related to $\cP$.

\begin{conditionA}\label{condA}
  Let $s\in\R_+$, let $\tau$ be a stopping time such  that $\tau\geq s$, let $\bomega\in\Omega$ and $P\in \cP(s,\bomega)$. Set $\theta:=\tau^{s,\bomega}-s$.
  \begin{enumerate}%
    \item[(A1)] The graph $\{(P',\omega): \omega\in\Omega,\; P'\in \cP(\tau,\omega)\} \,\subseteq\, \fPO\times\Omega$ is analytic.

    \item[(A2)] We have $P^{\theta,\omega} \in\cP(\tau,\bomega\otimes_s\omega)$ for $P$-a.e.\ $\omega\in\Omega$.

    \item[(A3)] If $\nu: \Omega \to \fPO$ is an $\cF_\theta$-measurable kernel and $\nu(\omega)\in \cP(\tau,\bomega\otimes_s\omega)$ for $P$-a.e.\ $\omega\in\Omega$,
    then the measure defined by
    \begin{equation}\label{eq:defPbar}
      \bar{P}(A)=\iint (\1_A)^{\theta,\omega}(\omega') \,\nu(d\omega';\omega)\,P(d\omega),\quad A\in \cF
    \end{equation}
    is an element of $\cP(s,\bomega)$.
  \end{enumerate}
\end{conditionA}

Conditions (A1)--(A3) will ensure that the conditional expectation is measurable and satisfies the ``tower property'' (see Proposition~\ref{pr:NvH} below), which is the dynamic programming principle in this context (see~\cite{BertsekasShreve.78} for background). We remark that (A2) and (A3) imply that the family $\{\cP(s,\omega)\}$ is essentially determined by the set $\cP$. As mentioned above, in applications, $\cP$ will be the primary object and we shall simply write down a corresponding family $\{\cP(s,\omega)\}$ such that $\cP=\cP(0,\omega)$ and such that Condition~{\rm(A)} is satisfied. To illustrate this, let us state a model where the possible values of the volatility are described by a set-valued process $\bD$ and which will be the main application of our results. This model was first introduced in \cite{Nutz.10Gexp} and further studied in \cite{NutzVanHandel.12}; it generalizes the $G$-expectation of \cite{Peng.07, Peng.08} which corresponds to the case where $\bD$ is a (deterministic) compact convex set.

\begin{example}[Random $G$-Expectation]\label{ex:randomG}
  We consider a set-valued process $\bD: \Omega \times \R_+\to
  2^\S$; i.e., $\bD_t(\omega)$ is a set of matrices for each
  $(t,\omega)$.  We assume that $\bD$ is progressively graph-measurable:
  for every $t\in\R_+$,
  \[
  	\big\{(\omega,s,A)\in \Omega\times [0,t]\times\S:
  	A\in\bD_s(\omega)\big\}\in
  	\cF_t\times\cB([0,t])\times\cB(\S),
  \]
  where $\cB([0,t])$ and $\cB(\S)$ denote the Borel $\sigma$-fields of $\S$ and $[0,t]$.

  We want a set $\cP$ consisting of all $P\in\cP_S$ under which the volatility takes values in $\bD$ $P$-a.s. To this end, we introduce the auxiliary family $\{\cP(s,\omega)\}$:
  given $(s,\omega)\in\R_+\times\Omega$, we define $\cP(s,\omega)$ to be the collection of all $P\in\cP_S$ such that
  \begin{equation}\label{eq:defRandomGSets}
    \frac{d\br{B}_u}{du}(\tilde{\omega}) \in \bD_{u+s}(\omega\otimes_s\tilde{\omega}) \quad\mbox{for } P\times du\mbox{-a.e.}\; (\tilde{\omega},u) \in \Omega\times\R_+.
  \end{equation}
  In particular, $\cP:=\cP(0,\omega)$ then consists, as desired, of all $P\in\cP_S$ such that $d\br{B}_u/du \in \bD_u$ holds $P\times du$-a.e. We shall see in Corollary~\ref{co:randomG} that Condition~{\rm(A)} is satisfied in this example.
\end{example}

The following is the main result of~\cite{NutzVanHandel.12}; it is stated with the conventions
$\sup\emptyset = -\infty$ and $E^P[\xi]:=-\infty$ if $E^P[\xi^+]=E^P[\xi^-]=+\infty$, and $\esssup^P$ denotes the essential supremum under $P$.

\begin{proposition}\label{pr:NvH}
   Suppose that $\{\cP(s,\omega)\}$ satisfies Condition~{\rm(A)} and that $\cP\neq\emptyset$. Let $\sigma\leq\tau$ be stopping times and let $\xi:\Omega\to\overline{\R}$ be an upper semianalytic function. Then the function
   \[
     \cE_\tau(\xi)(\omega):=\sup_{P\in\cP(\tau,\omega)} E^P[\xi^{\tau,\omega}],\quad\omega\in\Omega
   \]
   is $\cF_\tau^*$-measurable and upper semianalytic. Moreover,
   \begin{equation}\label{eq:DPP}
     \cE_\sigma(\xi)(\omega) = \cE_\sigma(\cE_\tau(\xi))(\omega)\quad\mbox{for all}\quad \omega\in\Omega.
   \end{equation}
   Furthermore, with $\cP(\sigma;P)=\{P'\in \cP:\, P'=P \mbox{ on } \cF_\sigma\}$, we have
   \begin{equation}\label{eq:esssupDPP}
     \cE_\sigma(\xi) = \mathop{\esssup^P}_{P'\in \cP(\sigma;P)} E^{P'}[\cE_\tau(\xi)|\cF_\sigma]\quad P\as\quad\mbox{for all}\quad P\in\cP.
   \end{equation}
\end{proposition}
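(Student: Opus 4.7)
The plan is to establish the three assertions of the proposition---measurability and upper semianalyticity of $\cE_\tau(\xi)$, the pointwise tower property \eqref{eq:DPP}, and the $P$-a.s.\ essential supremum representation \eqref{eq:esssupDPP}---in that order, with the three clauses of Condition~(A) contributing their respective roles in a standard measurable-selection argument. For the first assertion, I would start from the classical fact (cf.\ \cite[Prop.~7.48]{BertsekasShreve.78}) that for any upper semianalytic $\xi$ the map $(P',\omega)\mapsto E^{P'}[\xi^{\tau,\omega}]$ is upper semianalytic on $\fPO\times\Omega$; the $\omega$-dependence enters through the Borel map $\omega\mapsto\xi^{\tau,\omega}$, which is well defined because $\tau$ is a stopping time. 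Combined with the analyticity of the graph of $\omega\mapsto\cP(\tau,\omega)$ from (A1), the analytic projection theorem (\cite[Prop.~7.47]{BertsekasShreve.78}) shows that $\cE_\tau(\xi)(\omega)=\sup_{P'\in\cP(\tau,\omega)} E^{P'}[\xi^{\tau,\omega}]$ is upper semianalytic, and hence $\cF_\tau^{*}$-measurable.

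The tower property \eqref{eq:DPP} splits into two inequalities. For ``$\cE_\sigma(\xi)\leq\cE_\sigma(\cE_\tau(\xi))$'', fix $\omega$ and $P\in\cP(\sigma,\omega)$, apply Condition~(A) with $s=\sigma(\omega)$ and $\bomega=\omega$, and set $\theta:=\tau^{\sigma(\omega),\omega}-\sigma(\omega)$. By (A2), $P^{\theta,\omega'}\in\cP(\tau,\omega\otimes_\sigma\omega')$ for $P$-a.e.\ $\omega'$; a direct check of the concatenation identities gives $(\xi^{\sigma,\omega})^{\theta,\omega'}=\xi^{\tau,\omega\otimes_\sigma\omega'}$, so that
\[
  E^P[\xi^{\sigma,\omega}|\cF_\theta](\omega')=E^{P^{\theta,\omega'}}[\xi^{\tau,\omega\otimes_\sigma\omega'}]\leq\cE_\tau(\xi)(\omega\otimes_\sigma\omega')=(\cE_\tau(\xi))^{\sigma,\omega}(\omega'),
\]
and integrating under $P$ followed by taking the supremum over $P$ yields the claim. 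The reverse inequality is the delicate one: for each $\eps>0$, I would invoke Jankov--von Neumann (cf.\ \cite[Prop.~7.50]{BertsekasShreve.78}) on the upper semianalytic integrand above restricted to the analytic graph in (A1) to produce an $\cF_\theta$-measurable kernel $\omega'\mapsto\nu(\omega')\in\cP(\tau,\omega\otimes_\sigma\omega')$ that is $\eps$-optimal for $\cE_\tau(\xi)(\omega\otimes_\sigma\omega')$. Pairing this kernel with an $\eps$-optimal $P\in\cP(\sigma,\omega)$ for $\cE_\sigma(\cE_\tau(\xi))(\omega)$, Condition~(A3) certifies that the glued measure $\bar P$ from \eqref{eq:defPbar} lies in $\cP(\sigma,\omega)$ and satisfies $E^{\bar P}[\xi^{\sigma,\omega}]\geq\cE_\sigma(\cE_\tau(\xi))(\omega)-2\eps$; sending $\eps\downarrow0$ completes the argument.

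For the $P$-a.s.\ identity \eqref{eq:esssupDPP}, the ``$\geq$'' direction is easy: if $P'\in\cP(\sigma;P)$, then $(P')^{\sigma,\omega}\in\cP(\sigma,\omega)$ for $P'$-a.e.\ $\omega$ by (A2) at $s=0$, so $E^{P'}[\cE_\tau(\xi)|\cF_\sigma](\omega)\leq\cE_\sigma(\cE_\tau(\xi))(\omega)=\cE_\sigma(\xi)(\omega)$ holds $P'$-a.s., hence $P$-a.s.\ since $P'=P$ on $\cF_\sigma$. For ``$\leq$'', fix $\eps>0$, use Jankov--von Neumann to select an $\cF_\sigma$-measurable kernel $\omega\mapsto Q_\omega\in\cP(\sigma,\omega)$ with $E^{Q_\omega}[(\cE_\tau(\xi))^{\sigma,\omega}]\geq\cE_\sigma(\cE_\tau(\xi))(\omega)-\eps$, and paste $P$ with $Q_\omega$ at $\sigma$ via (A3) (taken with $s=0$, $\tau=\sigma$) to obtain $P_\eps\in\cP(\sigma;P)$ whose conditional distribution given $\cF_\sigma$ is $Q_\omega$; then $E^{P_\eps}[\cE_\tau(\xi)|\cF_\sigma]\geq\cE_\sigma(\xi)-\eps$ $P$-a.s., and running $\eps$ along a null sequence gives the bound. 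The principal obstacle throughout is this measurable-selection-and-pasting step: (A1) is what renders the Jankov--von Neumann selections measurable, and (A3) is calibrated precisely so that the resulting pasted measures remain inside $\cP$---without either, the tower property would fail even for Borel claims.
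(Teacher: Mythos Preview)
The paper does not prove this proposition at all: it is quoted as the main result of \cite{NutzVanHandel.12} and used as a black box. Your sketch is essentially the argument carried out in that reference, so in spirit you are reproducing the ``paper's proof'' (by proxy). The overall architecture---(A1) feeding the analytic-projection step for measurability, (A2) giving the easy inequality in the tower property, and Jankov--von~Neumann selection combined with (A3) giving the hard inequality and the esssup representation---is exactly right.

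Two technical points are worth tightening. First, Jankov--von~Neumann yields only an \emph{analytically} (hence universally) measurable selector, not an $\cF_\theta$-measurable one, whereas (A3) as stated requires an $\cF_\theta$-measurable kernel. You need the extra step of modifying the selector on a $P$-nullset to obtain a Borel kernel that still satisfies the constraint $P$-a.e.; this is routine but should be said. Second, the $\eps$-optimal selection must accommodate the possibility that $\cE_\tau(\xi)$ takes the value $+\infty$ (recall the paper's convention $\sup\emptyset=-\infty$ and $E^P[\xi]:=-\infty$ when both parts are infinite); the usual fix is to select a measure achieving value at least $\cE_\tau(\xi)\wedge n - \eps$ and let $n\to\infty$, or to phrase the selection via the Jankov--von~Neumann theorem in the form of \cite[Prop.~7.50]{BertsekasShreve.78}, which handles infinite values directly.
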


\subsection{Main Results}

Some more notation is needed to state our duality result. In what follows, the set $\cP$ determined by the family $\{\cP(s,\omega)\}$ will be a subset of $\cP_S$. We shall use a finite time horizon $T\in\R_+$ and the filtration $\G= (\cG_t)_{0\leq t \leq T}$, where
\begin{equation*}
  \cG_t:= \cF^{*}_{t}\vee \mathcal{N}^{\mathcal{P}}; %
\end{equation*}
here $\cF^{*}_{t}$ is the universal completion of $\cF_t$ and $\mathcal{N}^{\mathcal{P}}$ is the collection of sets which are $(\cF_T,P)$-null for all $P\in\cP$.

Let $H$ be a $\G$-predictable process taking values in $\R^d$ and such that $\int_0^T H_u^\top\,d\br{B}_u\,H_u <\infty$ $P$-a.s.\ for all $P\in\cP$. Then $H$ is called an admissible trading strategy if
\footnote{
 Here $\int H\,dB$ is, with some abuse of notation, the usual It\^o integral under the fixed measure $P$. We remark that we could also define the integral simultaneously under all $P\in\cP$ by the construction of~\cite{Nutz.11int}. This would yield a cosmetically nicer result, but we shall avoid the additional set-theoretic subtleties as this is not central to our approach.
}
$\int H\,dB$ is a $P$-supermartingale for all $P\in\cP$; as usual, this is to rule out doubling strategies. We denote by $\cH$ the set of all admissible trading strategies.

\begin{theorem}\label{th:duality}
  Suppose $\{\cP(s,\omega)\}$ satisfies Condition~{\rm(A)} and $\emptyset\neq\cP\subset \cP_S$. Moreover, let
  $\xi:\Omega\to\overline{\R}$ be an upper semianalytic, $\cG_T$-measurable function such that  $\sup_{P\in\cP} E^P[|\xi|]<\infty$.
  Then
  \begin{align*}
    &\sup_{P\in\cP} E^P[\xi] \\
    &=\min\left\{x\in\R:\,\exists\, H\in \cH\mbox{ with } x+ \int_0^T H_u\,dB_u\geq \xi\;P\as\mbox{ for all }P\in\cP\right\}.
  \end{align*}
\end{theorem}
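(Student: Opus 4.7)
The $\le$ inequality is the easy superreplication bound: for $H\in\cH$ with $x+\int_0^T H_u\,dB_u\ge\xi$ $P$-a.s., taking $P$-expectations and using the $P$-supermartingale property of $\int H\,dB$ gives $x\ge E^P[\xi]$ for every $P\in\cP$, hence $x\ge\sup_{P\in\cP}E^P[\xi]$. The content of the theorem is the reverse inequality. Setting $x_0:=\sup_{P\in\cP}E^P[\xi]$, the plan is to construct $H\in\cH$ with $x_0+\int_0^T H_u\,dB_u\ge\xi$ $P$-a.s.\ for every $P\in\cP$, following the three-step programme outlined in the introduction.

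\emph{Steps (i)--(ii): the Snell envelope.} Condition~(A) and Proposition~\ref{pr:NvH} produce the conditional nonlinear expectation $\cE_t(\xi)(\omega)=\sup_{P\in\cP(t,\omega)}E^P[\xi^{t,\omega}]$ as a well-defined, $\cF_t^*$-measurable, upper semianalytic function enjoying the pointwise tower property and the essential-supremum identity~\eqref{eq:esssupDPP}. The latter makes $(\cE_t(\xi))_{t\in[0,T]}$ a $\G$-supermartingale under every $P\in\cP$, with $\cE_0(\xi)=x_0$ and $\cE_T(\xi)=\xi$ (using $\cG_T$-measurability of $\xi$). Combining this with $\sup_P E^P[|\xi|]<\infty$, a countable-rationals right-limit argument then produces a single $\G$-adapted càdlàg process $Y$ such that $Y_t=\lim_{r\downarrow t,\,r\in\Q}\cE_r(\xi)$ $P$-a.s.\ for every $P\in\cP$; a second appeal to~\eqref{eq:esssupDPP} shows that $Y$ is a $P$-supermartingale for each $P$, with $Y_0=x_0$ and $Y_T=\xi$ $P$-a.s.

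\emph{Step (iii).} For each fixed $P\in\cP\subset\cP_S$, the Doob--Meyer decomposition yields $Y=x_0+M^P-A^P$ with $A^P$ predictable and nondecreasing and $M^P$ a local $P$-martingale; since $P\in\cP_S$ carries the predictable representation property (Lemma~\ref{le:MRPandVersions}), there exists a predictable $H^P$ with $M^P=\int H^P\,dB$. It remains to \emph{aggregate} the family $\{H^P\}_{P\in\cP}$, defined only $P$-a.s.\ for each $P$, into a single $\G$-predictable process $H$. I would do this via the pathwise identity $H^P\,d\br{B}=d\br{Y,B}$ together with a $P$-universal, measurable construction of the brackets $\br{B}$ and $\br{Y,B}$ (e.g.\ the pathwise integration of~\cite{Nutz.11int}), so that $H$ is defined pointwise on $\Omega$ and coincides with $H^P$ under every $P$. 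Granting this aggregation,
\[
 x_0+\int_0^T H_u\,dB_u=x_0+M^P_T=Y_T+A^P_T\ge Y_T=\xi\q P\as
\]
for every $P\in\cP$, and admissibility follows from a standard Fatou argument on the local martingale $M^P$ using $E^P[|\xi|]<\infty$. The main obstacle is precisely this $P$-universal aggregation in the non-dominated setting: it is the point at which the hypothesis $\cP\subset\cP_S$ (which supplies both the representation property and a sufficiently rigid volatility structure) becomes indispensable, and it is where the analytic/measurable input of Steps (i)--(ii) is converted into a genuine superreplicating integrand.
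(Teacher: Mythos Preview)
Your outline follows exactly the three-step programme of the paper, including the aggregation of $\{H^P\}$ via the pathwise identity $H^P\,d\br{B}=d\br{Y,B}$ (the paper invokes Bichteler's construction rather than \cite{Nutz.11int}, but the idea is the same). Two steps are glossed over, however, and both are non-trivial.

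First, asserting that $(\cE_t(\xi))_t$ is a $P$-supermartingale requires $E^P[|\cE_t(\xi)|]<\infty$, which does not follow directly from $\sup_P E^P[|\xi|]<\infty$: the function $|\xi|$ need not be upper semianalytic, so~\eqref{eq:DPP} cannot be applied to it. The paper's Step~1 establishes integrability by first deducing from (A2)--(A3) that $\cP$ is stable under pasting at $t$; this stability implies that the family $\{E^{P'}[|\xi|\,|\cF_t]:P'\in\cP(t;P)\}$ is upward directed, so its essential supremum is an increasing limit along some $P_n\in\cP(t;P)$, and monotone convergence gives $E^P[|\cE_t(\xi)|]\le\sup_n E^{P_n}[|\xi|]\le\sup_{P'\in\cP}E^{P'}[|\xi|]$.

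Second, the claim ``$Y_0=x_0$'' is unjustified: the right limit $Y_0=\cE_{0+}(\xi)$ is a priori only $\cG_{0+}$-measurable, and the modification theorem yields merely $E^P[Y_0|\cF_0]\le\cE_0(\xi)=x_0$, i.e.\ $E^P[Y_0]\le x_0$. To conclude $Y_0\le x_0$ $P$-a.s.\ (which is all that is needed) one must know that $Y_0$ is $P$-a.s.\ constant. This is where $\cP\subset\cP_S$ enters a second time, via Lemma~\ref{le:MRPandVersions}: the right-continuity of $\overline{\F}^P$ forces $\cG_{0+}\subset\overline{\cF}^P_0$, which is $P$-trivial. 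Relatedly, $Y$ is naturally $\G^+$-adapted rather than $\G$-adapted; the paper works in $\G^+$ throughout Step~2 and only recovers a $\G$-predictable $H$ at the end by noting that $\br{Y,B}$ is continuous and hence already $\G$-adapted.
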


The assumption that $\cP\subset\cP_S$ will be essential for our proof, which is stated in Section~\ref{se:proofDuality}.
In order to have nontrivial examples where the previous theorem applies, it is essential to know that the set $\cP_S$ (seen as a constant family $\cP(s,\omega)\equiv \cP_S$) satisfies itself Condition~{\rm(A)}. This fact is our main technical result.

\begin{theorem}\label{th:PSsatisfiesA}
  The set $\mathcal{P}_S$ satisfies Condition~{\rm(A)}.
\end{theorem}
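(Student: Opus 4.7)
The strategy is to verify conditions (A1), (A2), (A3) in turn, taking $\cP(s,\omega)\equiv\cP_S$. The unifying device is an intrinsic characterization of membership in $\cP_S$: I would first establish
\[
\cP_S=\bigl\{P\in\fPO:\ B\text{ is a }P\text{-local martingale, }\langle B\rangle_t=\textstyle\int_0^t\hat\alpha_s\,ds,\ \hat\alpha\in\S^{>0}\ P\times dt\text{-a.e.}\bigr\},
\]
where $\hat\alpha$ is a fixed Borel version of the would-be density $d\langle B\rangle/dt$, built as a pointwise limit of Riemann sums of the quadratic variation functional on $\Omega$. The inclusion ``$\subseteq$'' is immediate from~\eqref{eq:defPS}. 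For ``$\supseteq$'', given such a $P$, L\'evy's theorem shows $W:=\int_0^\cdot\hat\alpha_s^{-1/2}\,dB_s$ is a $P$-Brownian motion, and transporting $\hat\alpha$ back to $(\Omega,P_0)$ along $\omega\mapsto W(\omega)$ produces a progressive integrand $\alpha$ with $P^\alpha=P$.

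\textbf{(A1).} With the intrinsic description in hand, $\cP_S$ is cut out of $\fPO$ by countably many Borel conditions: the local martingale property (closed under weak convergence), absolute continuity of $\langle B\rangle$ expressed via $\hat\alpha$, and strict positive definiteness of $\hat\alpha$ in the $P\times dt$-a.e.\ sense. Hence $\cP_S$ is Borel, a fortiori analytic, and the trivial graph $\cP_S\times\Omega$ is analytic in $\fPO\times\Omega$.

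\textbf{(A2) and (A3).} For (A2), given $P=P^\alpha\in\cP_S$ and a stopping time $\tau$, I would check directly that $P^{\tau,\omega}=P^{\alpha^{\tau,\omega}}$ for $P$-a.e.\ $\omega$, where $\alpha^{\tau,\omega}$ is the natural shift of $\alpha$; this reduces to the shift invariance of Wiener measure together with standard manipulations of regular conditional probabilities. For (A3), the main tool is measurable selection. Using (A1), I would apply the Jankov--von Neumann theorem to the analytic graph $\{(P',\hat\alpha'):\,P'\in\cP_S,\,\hat\alpha'\text{ is the intrinsic density of }P'\}$ to produce, for $P$-a.e.\ $\omega$, a density $\hat\beta(\omega,\cdot)$ representing $\nu(\omega)$, with $\omega\mapsto\hat\beta(\omega,\cdot)$ universally measurable. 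Setting
\[
\hat\gamma_t(\tilde\omega):=\hat\alpha_t(\tilde\omega)\1_{\{t<\theta(\tilde\omega)\}}+\hat\beta\bigl(\tilde\omega|_{[0,\theta(\tilde\omega)]};\,t-\theta(\tilde\omega),\tilde\omega\bigr)\1_{\{t\geq\theta(\tilde\omega)\}},
\]
a direct computation from~\eqref{eq:defPbar} then shows $\langle B\rangle^{\bar P}_t=\int_0^t\hat\gamma_s\,ds$ with $\hat\gamma\in\S^{>0}$ $\bar P\times dt$-a.e., whence $\bar P\in\cP_S$ by the intrinsic characterization.

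\textbf{Main obstacle.} The principal difficulty is the intrinsic characterization itself, specifically showing that every $P\in\fPO$ with positive definite absolutely continuous $\langle B\rangle$ admits a representation $P=P^\alpha$. This demands a careful change of variables between the canonical filtration under $P$ and the Brownian filtration on $(\Omega,P_0)$, with compatible progressive-measurability and null-set bookkeeping. Once this step is secured, (A1) is essentially measurability accounting, and (A3) is carried out directly at the level of intrinsic densities.
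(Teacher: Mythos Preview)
Your proposal hinges on the intrinsic characterization
\[
\cP_S=\bigl\{P\in\fPO:\ B\text{ is a }P\text{-local martingale, }\langle B\rangle_t=\textstyle\int_0^t\hat\alpha_s\,ds,\ \hat\alpha\in\S^{>0}\ P\times dt\text{-a.e.}\bigr\},
\]
and this is where the approach fails: the right-hand side is the set of \emph{weak} formulations (often denoted $\overline{\cP}_W$), and in general $\cP_S\subsetneq\overline{\cP}_W$. The step you flag as the ``main obstacle''---producing an $\F$-progressive $\alpha$ on Wiener space from the intrinsic density $\hat\alpha$ via the map $\omega\mapsto W(\omega)$---is not measurability bookkeeping but exactly the question of whether the SDE $dX=\hat\alpha^{1/2}(X)\,dW$ admits a strong solution. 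Barlow's one-dimensional SDE $dX=\sigma(X)\,dW$ with continuous $\sigma$ bounded and bounded away from zero, yet with no strong solution, yields a concrete counterexample: the law $P$ of any weak solution lies in $\overline{\cP}_W$, but if some $Y=\int\alpha^{1/2}\,dB$ under $P_0$ had law $P$, comparing quadratic variations would force $\alpha=\sigma(Y)^2$ $P_0\times dt$-a.e., making $Y$ a strong solution---a contradiction. Consequently your (A1) would only show that $\overline{\cP}_W$ is Borel, and your (A3) would only place $\bar P$ in $\overline{\cP}_W$; neither conclusion touches $\cP_S$. (As a smaller point, the local-martingale property is also not closed under weak convergence without further tightness hypotheses.)

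This is precisely why the paper takes a different route. For (A1) it does \emph{not} try to show $\cP_S$ is Borel; instead it exhibits $\cP_S$ as the image $\Phi\bigl(L^1_{loc}(\bar\Omega;\S^{>0})\bigr)$ of a Borel space under the Borel map $\alpha\mapsto P^\alpha$, which yields analyticity directly. For (A3) the paper stays on the Wiener-space side throughout: Jankov--von Neumann is applied to select, for $P$-a.e.\ $\omega$, an element $\phi(\omega)\in L^1_{loc}(\bar\Omega;\S^{>0})$ with $P^{\phi(\omega)}=\nu(\omega)$, and then---since $\phi$ is a map into a space of processes rather than a process---an approximation by countably-valued selectors is used to glue the $\phi(\omega)$'s into a single $\F$-progressive $\hat\alpha$, from which $\bar\alpha$ is built explicitly so that $\bar P=P^{\bar\alpha}$. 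The essential point is that membership in $\cP_S$ must be certified by exhibiting an integrand on $(\Omega,P_0)$; it cannot be read off from intrinsic properties of $\bar P$.
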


The proof is stated Section~\ref{se:proofPSsatisfiesA}. It is easy to see that if two families satisfy Condition~{\rm(A)}, then so does their intersection. In particular, we have:

\begin{corollary}\label{co:intersectionSatisfiesA}
  If $\{\cP(s,\omega)\}$ satisfies Condition~{\rm(A)}, so does $\{\cP(s,\omega)\cap\cP_S\}$.
\end{corollary}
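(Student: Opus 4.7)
The plan is to first establish the stronger statement noted in the remark preceding the corollary, namely that if two families $\{\cP^1(s,\omega)\}$ and $\{\cP^2(s,\omega)\}$ both satisfy Condition~(A), then so does the family $\{\cP^{12}(s,\omega)\}$ defined by $\cP^{12}(s,\omega):=\cP^1(s,\omega)\cap\cP^2(s,\omega)$. The corollary then follows instantly by taking $\cP^1(s,\omega) := \cP(s,\omega)$ (which satisfies Condition~(A) by hypothesis) and $\cP^2(s,\omega) \equiv \cP_S$ (which satisfies Condition~(A) by Theorem~\ref{th:PSsatisfiesA}, viewed as a family constant in $(s,\omega)$).

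I would verify the three clauses of Condition~(A) separately. For (A1), the graph of $\cP^{12}(\tau,\cdot)$ is the intersection of the graphs of $\cP^1(\tau,\cdot)$ and $\cP^2(\tau,\cdot)$ in $\fPO\times\Omega$. Each of these graphs is analytic by assumption, and the class of analytic subsets of a Polish space is closed under finite (in fact countable) intersections; hence the graph of $\cP^{12}(\tau,\cdot)$ is analytic. For (A2), fix $\bomega$ and $P\in\cP^{12}(s,\bomega)$. Since $P$ lies in both $\cP^1(s,\bomega)$ and $\cP^2(s,\bomega)$, the respective (A2) conditions yield Borel sets $N_1,N_2\subseteq\Omega$ of $P$-measure zero such that $P^{\theta,\omega}\in\cP^i(\tau,\bomega\otimes_s\omega)$ for all $\omega\notin N_i$ and $i=1,2$. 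Outside the $P$-null set $N_1\cup N_2$ we then have $P^{\theta,\omega}\in\cP^{12}(\tau,\bomega\otimes_s\omega)$, as required. For (A3), any $\cF_\theta$-measurable kernel $\nu$ whose values lie in $\cP^{12}(\tau,\bomega\otimes_s\omega)$ $P$-a.s.\ also takes values in each of $\cP^1(\tau,\bomega\otimes_s\omega)$ and $\cP^2(\tau,\bomega\otimes_s\omega)$ $P$-a.s.; applying (A3) for each of $\cP^1$ and $\cP^2$ separately, the pasted measure $\bar P$ defined by~\eqref{eq:defPbar} belongs to $\cP^1(s,\bomega)\cap\cP^2(s,\bomega)=\cP^{12}(s,\bomega)$.

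There is no real obstacle here: the statement is a closure-under-intersection property, and the only clause requiring a structural fact beyond set-theoretic bookkeeping is~(A1), which uses the standard stability of analytic sets under finite intersection (see \cite[Chapter~7]{BertsekasShreve.78}). Conditions~(A2) and~(A3) are purely measure-theoretic and pass through intersections because a finite union of $P$-null sets is $P$-null.
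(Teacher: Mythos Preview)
Your proposal is correct and follows exactly the approach indicated in the paper, which simply states that ``it is easy to see that if two families satisfy Condition~(A), then so does their intersection'' and deduces the corollary from Theorem~\ref{th:PSsatisfiesA}. Your explicit verification of (A1)--(A3) for the intersection family is precisely the routine argument the paper omits.
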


The following is the main application of our results.

\begin{corollary}\label{co:randomG}
  The family $\{\cP(s,\omega)\}$ related to the random $G$-expectation (as defined in Example~\ref{ex:randomG}) satisfies Condition~{\rm(A)}. In particular, the duality result of Theorem~\ref{th:duality} applies in this case.
\end{corollary}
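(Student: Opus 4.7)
The plan is to exhibit $\cP(s,\omega)$ from Example~\ref{ex:randomG} as the intersection of two families that each satisfy Condition~(A), and then invoke Corollary~\ref{co:intersectionSatisfiesA}. More precisely, I would first isolate the volatility-constraint part of the construction from the Brownian-martingale part. Let $\widetilde{\cP}(s,\omega)$ denote the collection of all martingale laws $P\in\fPO$ such that $\br{B}$ is absolutely continuous $P$-a.s.\ and \eqref{eq:defRandomGSets} holds. This is precisely the family associated with the random $G$-expectation model considered in \cite{NutzVanHandel.12}, for which Condition~(A) has already been established there (the adaptedness of $\widetilde{\cP}$ follows from the definition of the concatenation, since $\omega|_{[0,s]}=\tomega|_{[0,s]}$ implies $\omega\otimes_s\tilde{\omega}'=\tomega\otimes_s\tilde{\omega}'$ pointwise, so the constraint is preserved).

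Next, by Theorem~\ref{th:PSsatisfiesA}, the constant family $\cP(s,\omega)\equiv\cP_S$ satisfies Condition~(A). Now observe that the family defined in Example~\ref{ex:randomG} is exactly
\[
  \cP(s,\omega) = \widetilde{\cP}(s,\omega) \cap \cP_S,
\]
because every $P\in\cP_S$ automatically has an absolutely continuous quadratic variation (under $P^\alpha$, one has $\br{B}_t=\int_0^t \alpha_u\,du$), so the only additional requirement is~\eqref{eq:defRandomGSets}. By Corollary~\ref{co:intersectionSatisfiesA}, the intersection inherits Condition~(A), which yields the first assertion.

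For the second assertion, one simply checks the hypotheses of Theorem~\ref{th:duality}: the family $\{\cP(s,\omega)\}$ satisfies Condition~(A) by what was just shown, and $\cP=\cP(0,\omega)\subset\cP_S$ by construction in Example~\ref{ex:randomG}; the remaining integrability assumption on $\xi$ is part of the hypothesis of Theorem~\ref{th:duality} and does not need to be re-verified here. The duality formula follows directly.

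The only step requiring real work is the identification of $\widetilde{\cP}$ with the family treated in \cite{NutzVanHandel.12}, together with the citation that it satisfies Condition~(A); this is essentially a bookkeeping matter once one has convinced oneself that the progressive graph-measurability of $\bD$ is the same hypothesis as in \cite{NutzVanHandel.12}. Apart from that, the corollary is a one-line consequence of Theorem~\ref{th:PSsatisfiesA} via the intersection principle.
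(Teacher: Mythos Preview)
Your proof is correct and follows essentially the same approach as the paper: define the auxiliary family $\widetilde{\cP}(s,\omega)$ of (local) martingale laws with absolutely continuous quadratic variation satisfying~\eqref{eq:defRandomGSets}, cite \cite{NutzVanHandel.12} for Condition~(A) on $\widetilde{\cP}$, and then intersect with $\cP_S$ via Corollary~\ref{co:intersectionSatisfiesA}. The only minor imprecision is that $\widetilde{\cP}$ should consist of \emph{local} martingale laws (the set $\fMa$ in the paper), but this does not affect the argument.
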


\begin{proof}
  Let $\fMa\subset\fPO$ be the set of all local martingale laws on $\Omega$ under which the quadratic variation of $B$ is absolutely continuous with respect to the Lebesgue measure; then $\cP_S\subset\fMa$. Moreover, let $\tilde{\cP}(s,\omega)$ be the set of all $P\in\fMa$ such
  that~\eqref{eq:defRandomGSets} holds. Then, clearly, $\cP(s,\omega)=\tilde{\cP}(s,\omega)\cap \cP_S$, and since we know from \cite[Theorem~4.3]{NutzVanHandel.12} that $\{\tilde{\cP}(s,\omega)\}$ satisfies Condition~{\rm(A)}, Corollary~\ref{co:intersectionSatisfiesA} shows that $\{\cP(s,\omega)\}$ again satisfies Condition~{\rm(A)}.
\end{proof}

\begin{remark}\label{rk:DPPforControl}
  In view of~\eqref{eq:DPP}, Theorem~\ref{th:PSsatisfiesA} yields the dynamic programming principle for the optimal control problem
  $\sup_\alpha E^{P_0}[\xi(X^\alpha)]$ with a very general reward functional $\xi$, where $X^\alpha=\int_0^\cdot \alpha^{1/2}_s\,dB_s$. We remark that the arguments in the proof of Theorem~\ref{th:PSsatisfiesA} could be extended to other control problems; for instance, the situation where the state process $X^\alpha$ is defined as the solution of a stochastic functional/differential equation as in~\cite{Nutz.11}.
\end{remark}

\section{Proof of Theorem~\ref{th:PSsatisfiesA}}\label{se:proofPSsatisfiesA}

In this section, we prove that $\cP_S$ (i.e., the constant family $\cP(s,\omega)\equiv\cP_S$) satisfies Condition~{\rm(A)}. Up to some minor differences in notation, property (A2) was already shown in~\cite[Lemma~4.1]{SonerTouziZhang.2010dual}, so we focus on~(A1) and~(A3).

Let us fix some more notation.  We denote by $E[\,\cdot\,]$ the expectation under the Wiener measure $P_0$; more generally, any notion related to $\Omega$ that implicitly refers to a measure will be understood to refer to $P_0$. Unless otherwise stated, any topological space is endowed with its Borel $\sigma$-field. As usual,
$L^0(\Omega;\R)$ denotes the set of equivalence classes of random variables on $\Omega$, endowed with the topology of convergence in measure. Moreover, we denote by $\bar{\Omega}=\Omega\times\R_+$ the product space; here the measure is $P_0\times dt$ by default, where $dt$ is the Lebesgue measure.
The basic space in this section is $L^0(\bar{\Omega}; \mathbb{S})$, the set of equivalence classes of $\S$-valued processes that are product-measurable.
We endow $L^0(\bar{\Omega};\S)$ (and its subspaces) with the topology of local convergence in measure; that is, the metric
\begin{equation}\label{eq:defConvLocallyInMeasure}
  d(\cdot, \cdot)= \sum_{k\in\N} 2^{-k}\, \frac{d_k(\cdot, \cdot)}{1+d_k(\cdot, \cdot)}\,, \quad \mbox{where} \quad d_k(X, Y) = E\bigg[\int_0^k 1\wedge |X_s -Y_s |\, ds \bigg].
\end{equation}
As a result, $X^n\to X$ in $L^0(\bar{\Omega};\S)$ if and only if $\lim_n E [\int_0^T 1\wedge |X^n_s -X_s |\, ds ] = 0$ for all $T\in \R_+$.

\subsection{Proof of (A1)}

The aim of this subsection is to show that $\mathcal{P}_S\subset \mathfrak{P}(\Omega)$ is analytic. To this end, we shall show that $\cP_S\subset \mathfrak{P}(\Omega)$ is the image of a Borel space (i.e., a Borel subset of a Polish space) under a Borel map; this implies the claim by \cite[Proposition~7.40, p.\,165]{BertsekasShreve.78}.
Indeed, let $L_{prog}^0(\bar{\Omega};\S)\subset L^0(\bar{\Omega};\S)$ be the subset of $\F$-progressively measurable processes
and
\begin{align*}
L^1_{loc}(\bar{\Omega};\mathbb{S}^{>0})=\Big\{ \alpha \in L_{prog}^0(\bar{\Omega}; \mathbb{S}^{>0}):\,
\int_0^T |\alpha_s|\, ds < \infty \ P_0\mbox{-a.s.\ for all} \ T\in\R_+ \Big\}.
\end{align*}
Moreover, we denote by
\begin{equation}\label{eq:defMapPalpha}
  \Phi: L^1_{loc}(\bar{\Omega};\mathbb{S}^{>0})\to\mathfrak{P}(\Omega), \quad \alpha\mapsto P^\alpha=P_0 \circ \left(\int_0^\cdot \alpha^{1/2}_s\, dB_s\right)^{-1}
\end{equation}
the map which associates to $\alpha$ the corresponding law. Then $\cP_S$ is the image
\[
 \cP_S= \Phi(L^1_{loc}(\bar{\Omega};\mathbb{S}^{>0}));
\]
therefore, the claimed property (A1) follows from the two subsequent lemmas.

\begin{lemma}\label{le:L1locBorelSpace}
  The space $L_{prog}^0(\bar{\Omega};\S)$ is Polish and $L^1_{loc}(\bar{\Omega};\mathbb{S}^{>0})\subset L_{prog}^0(\bar{\Omega};\S)$ is Borel.
\end{lemma}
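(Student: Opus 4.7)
The plan is to handle the two assertions of the lemma in turn. The ambient set $L^0(\bar\Omega;\S)$, endowed with the metric $d$ from \eqref{eq:defConvLocallyInMeasure}, is a standard Polish space (the classical fact about equivalence classes of measurable maps into a Polish space under a $\sigma$-finite measure, equipped with local convergence in measure). To conclude that $L_{prog}^0(\bar\Omega;\S)$ is Polish it will therefore suffice to show that it is closed in $L^0(\bar\Omega;\S)$, and I would do this by a subsequence argument: if $\alpha^n\to\alpha$ in $L^0$ with $\alpha^n\in L_{prog}^0$, a diagonal extraction yields a subsequence converging $P_0\times dt$-a.e.\ on $\bar\Omega$; the pointwise limit of this subsequence, redefined to $0$ on the exceptional set, is a progressively measurable representative of $\alpha$.

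For the second assertion I will decompose
\[
L^1_{loc}(\bar\Omega;\S^{>0})=\cA\cap\cB\cap L_{prog}^0(\bar\Omega;\S),
\]
where $\cA$ consists of those $\alpha\in L^0(\bar\Omega;\S)$ taking values in $\S^{>0}$ outside a $P_0\times dt$-null set, and $\cB$ consists of those $\alpha$ such that $\int_0^T|\alpha_s|\,ds<\infty$ $P_0$\as\ for every $T\in\R_+$. For $\cA$, I will exhaust the open set $\S^{>0}$ by the closed sets $F_n:=\{A\in\S:\lambda_{\min}(A)\geq 1/n\}$ and observe that for any closed $F\subset\S$ and any $T\in\N$, the functional
\[
\alpha\mapsto \mu_{T,F}(\alpha):=(P_0\times dt)\big\{(\omega,t)\in\Omega\times[0,T]:\alpha_t(\omega)\in F\big\}
\]
is upper semicontinuous on $L^0$: convergence in measure yields an a.e.-convergent subsequence, closedness of $F$ gives $\limsup_k \1_F(\alpha_t^{n_k}(\omega))\leq \1_F(\alpha_t(\omega))$ pointwise, reverse Fatou finishes, and the usual subsequence-of-subsequence trick lifts the conclusion to the original sequence. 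Since $\{F_n\}$ is increasing with $\S^{>0}=\bigcup_n F_n$, the condition $\cA$ is equivalent, for each $T\in\N$, to $\sup_n \mu_{T,F_n}(\alpha)=T$, which is a countable Boolean combination of closed level sets of upper-semicontinuous functionals, hence Borel; intersecting over $T\in\N$ shows that $\cA$ is Borel.

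For $\cB$ I will truncate: setting $J^M_T(\alpha):=\int_0^T M\wedge|\alpha_s|\,ds$, dominated convergence makes $\alpha\mapsto J^M_T(\alpha)$ continuous from $L^0(\bar\Omega;\S)$ into $L^1(\Omega;\R_+)$ and hence into $L^0(\Omega;\R_+)$. The map $\xi\mapsto P_0\{\xi>K\}$ on $L^0(\Omega;\R_+)$ is lower semicontinuous by the same portmanteau argument, so $\alpha\mapsto P_0\{J^M_T(\alpha)>K\}$ is Borel. Since $J^M_T\uparrow \int_0^T|\alpha_s|\,ds$ as $M\to\infty$, the a.s.-finiteness condition on $[0,T]$ is equivalent to $\lim_{K\to\infty}\lim_{M\to\infty}P_0\{J^M_T(\alpha)>K\}=0$, a countable combination of Borel sets; intersecting over $T\in\N$ then gives that $\cB$ is Borel.

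I expect the main obstacle to be the upper-semicontinuity step for $\cA$: the naive attempt to treat ``$\alpha\in\S^{>0}$ a.e.'' as a one-line pointwise condition fails because indicators of open sets are not well-behaved under convergence in measure, so the exhaustion by the closed sets $F_n$ together with portmanteau-style upper semicontinuity is the essential technical ingredient. Everything else then reduces to routine Borel manipulations.
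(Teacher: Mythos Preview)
Your proposal is correct and follows essentially the same architecture as the paper's proof: establish that $L^0(\bar\Omega;\S)$ is Polish and $L^0_{prog}$ closed in it, then handle the $\S^{>0}$-constraint and the local-integrability constraint separately via upper/lower semicontinuity arguments based on the Portmanteau idea, combined with truncation to get continuous approximating maps. The only noteworthy difference is in the treatment of your set $\cA$: you exhaust the open set $\S^{>0}$ from the inside by the closed sets $F_n$ and work with a supremum of upper-semicontinuous functionals, whereas the paper simply observes that the complement $\S\setminus\S^{>0}$ is closed, so $\alpha\mapsto\mu_T[\alpha\in\S\setminus\S^{>0}]$ is itself upper semicontinuous and $\cA=\bigcap_T\{\alpha:\mu_T[\alpha\in\S\setminus\S^{>0}]=0\}$ is a $G_\delta$ in one stroke. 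Your argument is perfectly valid but a bit more laborious than necessary; for $\cB$ the paper uses an $\arctan$ compactification in place of your double limit $\lim_K\lim_M$, which is again only a cosmetic difference on top of the same truncation $M\wedge|\alpha_s|$.
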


\begin{proof}
  We start by noting that $L^0(\bar{\Omega};\S)$ is Polish. Indeed, as $\R_+$ and $\Omega$ are separable metric spaces, we have that $L^2(\Omega\times [0,T];\S)$ is separable for all $T\in\N$ (e.g., \cite[Section~6.15, p.\,92]{Doob.94}). A density argument and the definition~\eqref{eq:defConvLocallyInMeasure} then show that $L^0(\bar{\Omega};\S)$ is again separable. On the other hand, the completeness of $\S$ is inherited by $L^0(\bar{\Omega};\S)$; see, e.g., \cite[Corollary~3]{Chung.81}.
  Since $L_{prog}^0(\bar{\Omega};\S)\subset L^0(\bar{\Omega};\S)$ is closed, it is again a Polish space.

  Next, we show that
  $L^1_{loc}(\bar{\Omega};\mathbb{S})$ is a Borel subset of $L_{prog}^0(\bar{\Omega};\S)$.
  We first observe that
  \[
    L^1_{loc}(\bar{\Omega};\mathbb{S}) = \bigcap_{T\in\N} \left\{\alpha\in L_{prog}^0(\bar{\Omega};\S):\, P_0\left[ \arctan \left(\int_0^T |\alpha_s|\,ds\right) \geq \frac{\pi}{2}\right]=0\right\}.
  \]
  Therefore, it suffices to show that for fixed $T\in\N$,
  \[
    \alpha\mapsto P_0\left[ \arctan \left(\int_0^T |\alpha_s|\,ds\right) \geq \frac{\pi}{2}\right]
  \]
  is Borel. Indeed, this is the composition of the function
  \[
    L^0(\Omega;\R)\to \R , \quad f\mapsto P_0\left[ f\geq \pi/2 \right],
  \]
  which is upper semicontinuous by the Portmanteau theorem and thus Borel, with the map
  \[
    L_{prog}^0(\bar{\Omega};\S)\to L^0(\Omega;\R), \quad  \alpha\mapsto \arctan \left(\int_0^T |\alpha_s|\,ds\right).
  \]
  The latter is Borel because it is, by monotone convergence, the pointwise limit  of the maps
  \[
    \alpha\mapsto \arctan \left(\int_0^T n\wedge |\alpha_s|\,ds\right),
  \]
  which are continuous for fixed $n\in\N$ due to the elementary estimate
  \begin{multline*}
  E\bigg[1\wedge\bigg|\arctan\bigg(\int_0^T n\wedge|\alpha_s| \, ds\bigg) - \arctan\bigg(\int_0^T n\wedge|\tilde{\alpha}_s| \, ds\bigg) \bigg| \bigg]\\
  \leq E \bigg[ \int_0^T  n\wedge |\alpha_s - \tilde{\alpha}_s |\ ds \bigg].
  \end{multline*}
  This completes the proof that $L^1_{loc}(\bar{\Omega};\mathbb{S})$ is a Borel subset of $L_{prog}^0(\bar{\Omega};\S)$. To deduce the same property for $L^1_{loc}(\bar{\Omega};\mathbb{S}^{>0})$, note that
  \[
    L^1_{loc}(\bar{\Omega};\mathbb{S}^{>0}) = \bigcap_{T\in\N} \big\{\alpha\in L^1_{loc}(\bar{\Omega};\mathbb{S}):\, \mu_T\big[ \alpha \in \S\setminus\S^{>0}\big]=0    \big\},
  \]
  where $\mu_T$ is the product measure $\mu_T(A)= E[\int_0^T\1_A\,ds]$. As $\S^{>0}\subset\S$ is open, $\alpha\mapsto \mu_T[ \alpha \in \S\setminus\S^{>0}]$ is upper semicontinuous and we conclude that
  $L^1_{loc}(\bar{\Omega};\mathbb{S}^{>0})$ is again Borel.
\end{proof}

\begin{lemma}\label{le:PalphaBorel}
  The map $\Phi: L^1_{loc}(\bar{\Omega};\mathbb{S}^{>0}) \to \mathfrak{P}(\Omega)$ defined in~\eqref{eq:defMapPalpha} is Borel.
\end{lemma}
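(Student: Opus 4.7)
The plan is to exhibit $\Phi$ as the pointwise limit (in the weak topology) of continuous maps $\Phi_n$; since $\fPO$ is Polish, this will imply that $\Phi$ is Borel. Fix a continuous bounded truncation $\varphi_n:\S\to\S$, for instance $\varphi_n(A) := A\cdot n/(n+|A|)$, so that $|\varphi_n(A)|\leq n$, $|\varphi_n(A)|\leq |A|$, and $\varphi_n(A)\to A$ pointwise. Set $T_n(\alpha)_s(\omega) := \varphi_n(\alpha_s(\omega))$; by dominated convergence $T_n$ is continuous from $L^1_{loc}(\bar{\Omega};\S^{>0})$ into itself, with image contained in processes uniformly bounded in norm by $n$. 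Define $\Phi_n:=\Phi\circ T_n$.

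To see that $\Phi_n$ is continuous, I would decompose it through the chain
\[
 \alpha \;\mapsto\; T_n(\alpha) \;\mapsto\; (T_n(\alpha))^{1/2} \;\mapsto\; X^{T_n(\alpha)} \;\mapsto\; P_0\circ \bigl(X^{T_n(\alpha)}\bigr)^{-1},
\]
where $X^\beta := \int_0^\cdot \beta_s^{1/2}\,dB_s$. The matrix square root is continuous on PSD matrices of norm bounded by $n$, so on such bounded processes $\beta\mapsto \beta^{1/2}$ is continuous from local convergence in measure into $L^2$-convergence on $\Omega\times[0,T]$ for every $T$ (bounded convergence). It\^o's isometry together with Doob's maximal inequality shows that $\gamma\mapsto \int_0^\cdot \gamma_s\,dB_s$ is continuous from the progressively measurable part of $L^2(\Omega\times[0,T];\R^{d\times d})$ into $L^2(\Omega;C([0,T];\R^d))$ for each $T\in\N$. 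Finally, convergence in probability of $\Omega$-valued random variables implies weak convergence of their laws, so $X\mapsto P_0\circ X^{-1}$ is continuous into $\fPO$. Composing these four maps, $\Phi_n$ is continuous.

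It remains to check that $\Phi_n(\alpha)\to \Phi(\alpha)$ weakly for every $\alpha\in L^1_{loc}(\bar{\Omega};\S^{>0})$. Since $T_n(\alpha)_s\to\alpha_s$ pointwise with $|T_n(\alpha)_s|\leq|\alpha_s|$, continuity of the square root on PSD matrices gives $(T_n(\alpha)_s)^{1/2}\to \alpha_s^{1/2}$ pointwise and $|(T_n(\alpha)_s)^{1/2}|^2\leq |\alpha_s|$; as $\int_0^T|\alpha_s|\,ds<\infty$ $P_0$-a.s., dominated convergence yields $\int_0^T |(T_n(\alpha)_s)^{1/2}-\alpha_s^{1/2}|^2\,ds\to 0$ $P_0$-a.s. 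A standard localisation using the stopping times $\tau_k:=\inf\{t:\int_0^t|\alpha_s|\,ds\geq k\}$ then gives ucp convergence $X^{T_n(\alpha)}\to X^\alpha$, hence weak convergence $\Phi_n(\alpha)\to\Phi(\alpha)$. As a pointwise limit of Borel maps into a Polish space, $\Phi$ is Borel. The main technical friction will be in juggling the several topologies (local convergence in measure, $L^2$, ucp, weak convergence) so that the displayed chain of compositions is genuinely continuous; each individual reduction is a straightforward application of dominated convergence, but the bookkeeping has to be done carefully.
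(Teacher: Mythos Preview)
Your argument is correct and follows essentially the same route as the paper: truncate to obtain bounded integrands, prove continuity of the truncated maps $\Phi_n$ via the $L^2$/ucp theory of stochastic integration, and then pass to the limit using dominated convergence to conclude that $\Phi=\lim_n\Phi_n$ is Borel. The only cosmetic differences are that the paper truncates $\alpha^{1/2}$ directly (via the radial projection $\pi_n$) rather than truncating $\alpha$ and then taking a square root, and it packages both the continuity of $\Phi_n$ and the pointwise convergence $\Phi_n\to\Phi$ as two applications of the dominated convergence theorem for stochastic integrals \cite[Theorem~IV.32]{Protter.05} instead of spelling out the It\^o isometry/Doob inequality and the localisation step.
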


\begin{proof}
  Consider first, for fixed $n\in\N$, the mapping $\Phi_n$ defined by
  \[
  \Phi_n(\alpha)= P_0 \circ \left(\int_0^\cdot \pi_n(\alpha^{1/2}_s)\, dB_s\right)^{-1},
  \]
  where $\pi_n$ is the projection onto the ball of radius $n$ around the origin in $\S$. It follows from a direct extension of the dominated convergence theorem for stochastic integrals \cite[Theorem~IV.32, p.\,176]{Protter.05} that
  \[
    \alpha\mapsto \int_0^\cdot \pi_n(\alpha^{1/2}_s)\, dB_s
  \]
  is continuous for the topology of uniform convergence on compacts in probability (``ucp''), and hence that $\Phi_n$ is continuous  for the topology of weak convergence. In particular, $\Phi_n$ is Borel. On the other hand, a second application of dominated convergence shows that
  \[
    \int_0^\cdot \pi_n(\alpha^{1/2}_s)\, dB_s \to \int_0^\cdot \alpha^{1/2}_s\, dB_s \quad \mbox{ucp}\quad \mbox{as $n\to\infty$}
  \]
  for all $\alpha\in L^1_{loc}(\bar{\Omega};\mathbb{S}^{>0})$ and hence that $\Phi(\alpha)=\lim_n \Phi_n(\alpha)$ in $\mathfrak{P}(\Omega)$ for all $\alpha$. Therefore, $\Phi$ is again Borel.
\end{proof}

\subsection{Proof of (A3)}

Given a stopping time $\tau$, a measure $P\in\cP_S$ and
an $\cF_\tau$-measurable kernel $\nu: \Omega \rightarrow \mathfrak{P}(\Omega)$ with $\nu(\omega) \in \mathcal{P}_S$ for $P$-a.e.\
$\omega \in \Omega$, our aim is to show that
\begin{equation*}
 \bar{P}(A) := \iint (\textbf{1}_A)^{\tau, \omega}(\omega')\, \nu(\omega, d\omega')\, P(d\omega), \quad \ A \in \cF
\end{equation*}
defines an element of $\mathcal{P}_S$. That is, we need to show that $\bar{P}=P^{\bar{\alpha}}$ for some $\bar{\alpha} \in L^1_{loc}(\bar{\Omega};\mathbb{S}^{>0})$. In the case where $\nu$ has only countably many values, this can be accomplished by explicitly writing down an appropriate process $\bar{\alpha}$, as was shown already in~\cite{SonerTouziZhang.2010dual}. The present setup requires general kernels and a measurable selection proof. Roughly speaking, up to time $\tau$, $\bar{\alpha}$ is given by the integrand $\alpha$ determining $P$, whereas after $\tau$, it is given by the integrand of $\nu(\omega)$, for a suitable $\omega$. In Step~1 below, we state precisely the requirement for $\bar{\alpha}$; in Step~2, we construct a measurable selector for the integrand of $\nu(\omega)$; finally, in Step~3, we show how to construct the required process $\bar{\alpha}$ from this selector.

\medskip

\emph{Step 1.} Let $\alpha\in L^1_{loc}(\bar{\Omega};\mathbb{S}^{>0})$ be such that $P=P^\alpha$, let $X^\alpha:=\int_0^\cdot \alpha^{1/2}_s\, dB_s$, and let $\tilde{\tau}:=\tau\circ X^\alpha$. Suppose we have found $\hat{\alpha}\in L^0_{prog}(\bar{\Omega};\mathbb{S})$ such that
\[%
  \hat{\alpha}^\omega:=\hat{\alpha}_{\cdot + \tilde{\tau}(\omega)}(\omega\otimes_{\tilde{\tau}}\cdot)\in
 L^1_{loc}(\bar{\Omega};\mathbb{S}^{>0}) \mbox{ and } P^{\hat{\alpha}^\omega}=\nu(X^\alpha(\omega)) \mbox{ for $P_0$-a.e.\ $\omega\in \Omega$.}
\]%
Then $\bar{P}=P^{\bar{\alpha}}$ for $\bar{\alpha}$ defined by
\begin{equation*}%
  \bar{\alpha}_s(\omega)
  = \alpha_s(\omega) \1_{[0,\tilde{\tau}(\omega)]}(s) + \hat{\alpha}_{s}(\omega)\1_{(\tilde{\tau}(\omega),\infty)}(s).
\end{equation*}

Indeed, we clearly have $\bar{\alpha}\in L^1_{loc}(\bar{\Omega};\mathbb{S}^{>0})$. Moreover, we note that $\tilde{\tau}$ is again a stopping time by Galmarino's test \cite[Theorem~IV.100, p.\,149]{DellacherieMeyer.78}.
To see that $\bar{P}=P^{\bar{\alpha}}$, it suffices to show that
\begin{equation*}
 E^{\bar{P}}\big[\psi\big(B_{t_1},\dots, B_{t_n}\big)\big]= E^{P_0}\big[\psi\big(X^{\bar{\alpha}}_{t_1},\dots, X^{\bar{\alpha}}_{t_n}\big)\big]
\end{equation*}
for all $n\in\N$, $0<t_1<t_2<\dots<t_n<\infty$,  and any bounded continuous function $\psi:\R^n\to \R$. Recall that $B$ has stationary and independent increments under the Wiener measure $P_0$.
For $P_0$-a.e.\ $\omega\in\Omega$ such that $\tilde{t}:=\tilde{\tau}(\omega) \in [t_j, t_{j+1})$, we have
\begin{multline*}
  E^{P_0}\big[\psi\big(X^{\bar{\alpha}}_{t_1},\dots, X^{\bar{\alpha}}_{t_n}\big)\big|\cF_{\tilde{\tau}}\big](\omega)\\
  \shoveleft{=E^{P_0}\big[\psi\big(X^{\bar{\alpha}}_{t_1}(\omega\otimes_{\tilde{t}} B),\dots, X^{\bar{\alpha}}_{t_n}(\omega\otimes_{\tilde{t}} B)\big)\big]}\\
  \shoveleft{ = E^{P_0}\bigg[ \psi\bigg(X^{\alpha}_{t_1}(\omega),\dots, X^{\alpha}_{t_j}(\omega), X^{\alpha}_{\tilde{t}}(\omega)
   + \int_{\tilde{t}}^{t_{j+1}} \hat{\alpha}_{s}^{1/2}(\omega\otimes_{\tilde{t}}B)\, dB_{s-\tilde{t}},\dots,}\\
    X^{\alpha}_{\tilde{t}}(\omega) + \int_{\tilde{t}}^{t_{n}}\hat{\alpha}_{s}^{1/2}(\omega\otimes_{\tilde{t}} B)\, dB_{s-\tilde{t}}\bigg)\bigg]
\end{multline*}
and thus, by the definition of $\alpha^{\omega}$,
\begin{multline*}
  E^{P_0}\big[\psi\big(X^{\bar{\alpha}}_{t_1},\dots, X^{\bar{\alpha}}_{t_n}\big)\big|\cF_{\tilde{\tau}}\big](\omega)\\
  \shoveleft{ = E^{P^{\alpha^{\omega}}} \big[\psi\big(X^{\alpha}_{t_1}(\omega),\dots, X^{\alpha}_{t_j}(\omega),  X^{\alpha}_{\tilde{t}}(\omega) +B_{t_{j+1}-\tilde{t}},\dots,
   X^{\alpha}_{\tilde{t}}(\omega) +B_{t_{n}-\tilde{t}}\big)\big] } \\
  \shoveleft{ = \int \psi\big(X^{\alpha}_{t_1}(\omega),\dots, X^{\alpha}_{t_j}(\omega),
   X^{\alpha}_{\tilde{t}}(\omega) +B_{t_{j+1}-\tilde{t}}(\omega'),\dots, }\\
   X^{\alpha}_{\tilde{t}}(\omega) +B_{t_{n}-\tilde{t}}(\omega')\big)\,\nu\big(X^{\alpha}(\omega), d\omega'\big).
\end{multline*}
Integrating both sides with respect to $P_0(d\omega)$ and noting that $\tilde{t}\in [t_j, t_{j+1})$ implies $t:={\tau}(\omega) \in [t_j, t_{j+1})$ $P$-a.s., we conclude that
\begin{multline*}
E^{P_0}\big[\psi\big(X^{\bar{\alpha}}_{t_1},\dots, X^{\bar{\alpha}}_{t_n}\big)\big] \\
\shoveleft{ = \iint \psi\big(X^{\alpha}_{t_1}(\omega),\dots, X^{\alpha}_{t_j}(\omega),
   X^{\alpha}_{\tilde{t}}(\omega) +B_{t_{j+1}-\tilde{t}}(\omega'),\dots, }\\
\shoveright{X^{\alpha}_{\tilde{t}}(\omega) +B_{t_{n}-\tilde{t}}(\omega')\big)\, \nu(X^\alpha(\omega), d\omega')\, P_0(d\omega) }\\
\shoveleft{  = \iint \psi\big(B_{t_1}(\omega),\dots, B_{t_j}(\omega), B_{t}(\omega) +B_{t_{j+1}-t}(\omega'),\dots, }\\
\shoveright{ B_{t}(\omega) +B_{t_{n}-t}(\omega')\big)\, \nu(\omega, d\omega')\, P(d\omega) }\\
\shoveleft{= \iint \psi^{\tau, \omega} \big(B_{t_1},\dots, B_{t_n}\big)(\omega') \, \nu(\omega, d\omega')\, P(d\omega)}\\
\shoveleft{ = E^{\bar{P}}\big[\psi\big(B_{t_1},\dots, B_{t_n}\big)\big]. }\\[-1em]
\end{multline*}
This completes the first step of the proof.

\medskip

\emph{Step 2.} We show that there exists an $\cF_\tau$-measurable function
\[
  \phi: \Omega\to L^1_{loc}(\bar{\Omega};\mathbb{S}^{>0}) \quad \mbox{such that} \quad P^{\phi(\omega)} = \nu(\omega) \quad \mbox{for $P$-a.e.\ }\omega\in\Omega.
\]
To this end, consider the set
\begin{equation*}
  A= \big\{(\omega,\alpha) \in \Omega \times L^1_{loc}(\bar{\Omega};\mathbb{S}^{>0}):\; \nu(\omega) = P^\alpha\big\} .
\end{equation*}
We have seen in Lemma~\ref{le:L1locBorelSpace} that $L^1_{loc}(\bar{\Omega};\mathbb{S}^{>0})$ is a Borel space.
On the other hand, we have from Lemma~\ref{le:PalphaBorel} that $\alpha\mapsto P^\alpha$ is Borel, and $\nu$ is Borel by assumption. Hence, $A$ is a Borel subset of
$\Omega \times L^1_{loc}(\bar{\Omega};\mathbb{S}^{>0})$. As a result, we can use the
Jankov--von Neumann theorem \cite[Proposition~7.49, p.\,182]{BertsekasShreve.78} to obtain an analytically measurable
map $\phi$ from the $\Omega$-projection of $A$ to $L^1_{loc}(\bar{\Omega};\mathbb{S}^{>0})$ whose graph is contained in $A$; that is,
\[
  \phi: \{\omega\in \Omega:\, \nu(\omega)\in \cP_S\} \to L^1_{loc}(\bar{\Omega};\mathbb{S}^{>0}) \quad \mbox{such that} \quad P^{\phi(\cdot)} = \nu(\cdot)\,.
\]
Since $\phi$ is, in particular, universally measurable, and since $\nu(\cdot)\in \cP_S$ $P$-a.s., we can alter $\phi$ on a $P$-nullset to obtain a Borel-measurable map
\[
  \phi: \Omega \to L^1_{loc}(\bar{\Omega};\mathbb{S}^{>0}) \quad \mbox{such that} \quad P^{\phi(\cdot)} = \nu(\cdot) \quad P\mbox{-a.s.}
\]
Finally, we can replace $\phi$ by $\omega\mapsto\phi(\omega_{\cdot\wedge \tau(\omega)})$, then we have the required $\cF_\tau$-measurability as a consequence of
Galmarino's test.
Moreover, since $A \in \cF_\tau\otimes \mathcal{B}(L^1_{loc}(\bar{\Omega};\mathbb{S}^{>0}))$ due to the $\cF_\tau$-measurability of $\nu$, Galmarino's test also shows that we still have
$P^{\phi(\cdot)} = \nu(\cdot)$ $P$-a.s., which completes the second step of the proof.

\medskip

\emph{Step 3.} It remains to construct $\hat{\alpha}\in L^0_{prog}(\bar{\Omega};\mathbb{S})$ as postulated in Step~1. While the map $\phi$ constructed in Step~2 will eventually yield the processes $\hat{\alpha}^\omega$ defined in Step~1, we note that $\phi$ is a map into a \emph{space of processes} and so we still have to glue its values into an actual process. This is simple when there are only countably many values; therefore, following a construction of~\cite{SonerTouzi.02b}, we use an approximation argument.

Since $L^1_{loc}(\bar{\Omega};\mathbb{S}^{>0})$ is separable (always for the metric introduced in~\eqref{eq:defConvLocallyInMeasure}), we can construct for any $n\in\N$ a countable Borel partition $(A^{n,k})_{k\geq 1}$ of $L^1_{loc}(\bar{\Omega};\mathbb{S}^{>0})$ such that the diameter of $A^{n,k}$ is smaller than $1/n$. Moreover, we fix $\gamma^{n,k}\in A^{n,k}$ for $k\geq1$. Then,
\[
  \phi_n(\omega) := \sum_{k\geq1} \gamma^{n,k} \1_{A^{n,k}}(\phi(\omega))
\]
satisfies
\begin{equation}\label{eq:phiUnivConv}
  \sup_{\omega\in \Omega } d(\phi_n(\omega),\phi(\omega))\leq \frac1n;
\end{equation}
that is, $\phi_n$ converges uniformly to $\phi$, as an $L^1_{loc}(\bar{\Omega};\mathbb{S}^{>0})$-valued map.

Let $\alpha$ and $\tilde{\tau}= \tau \circ X^\alpha$ be as in Step~1. Moreover, for any stopping time~$\sigma$, denote
\[
  \omega^\sigma_\cdot := \omega_{\cdot + \sigma(\omega)}-\omega_{\sigma(\omega)}, \quad \omega\in\Omega.
\]
Then, for fixed $n$, the process
\begin{align*}
  (\omega,s) \mapsto \hat{\alpha}^n_s(\omega)&:=\1_{[\tilde{\tau}(\omega),\infty)}(s)[\phi_n(X^\alpha(\omega))]_{s-\tilde{\tau}(\omega)}(\omega^{\tilde{\tau}(\omega)}) \\ &\phantom{:}\equiv \1_{[\tilde{\tau}(\omega),\infty)}(s)\sum_{k\geq1} \gamma_{s-\tilde{\tau}(\omega)}^{n,k}(\omega^{\tilde{\tau}(\omega)}) \1_{A^{n,k}}(\phi(X^\alpha(\omega)))
\end{align*}
is well defined $P_0$-a.s., and in fact an element of the Polish space  $L_{prog}^0(\bar{\Omega}; \mathbb{S})$. We show that $(\hat{\alpha}^n)$ is a Cauchy sequence and that the limit $\hat{\alpha}$ yields the desired process. Fix $T\in\R_+$ and $m,n\in\N$, then~\eqref{eq:phiUnivConv} implies that
\begin{equation}\label{eq:cauchy1}
  \int_\Omega\int_0^T 1\wedge \left|[\phi_m(\omega)]_s(\omega') - [\phi_n(\omega)]_s(\omega')\right| \, ds\, P_0(d\omega')\leq c_T \left(\frac1m + \frac1n\right)
\end{equation}
for all $\omega\in\Omega$, where $c_T$ is an unimportant constant coming from the definition of $d$ in~\eqref{eq:defConvLocallyInMeasure}.
In particular,
\begin{multline}\label{eq:cauchy2}
  \int_\Omega\int_\Omega\int_0^T 1\wedge \left|[\phi_m(X^\alpha(\omega))]_s(\omega') - [\phi_n(X^\alpha(\omega))]_s(\omega')\right| \, ds\, P_0(d\omega')\,P_0(d\omega)\\
  \leq c_T \left(\frac1m + \frac1n\right).
\end{multline}
Since $P_0$ is the Wiener measure, we have the formula
\[
  \int_\Omega g(\omega_{\cdot \wedge \tilde{\tau}(\omega)}, \omega^{\tilde{\tau}})\,P_0(d\omega) =   \int_\Omega\int_\Omega g(\omega_{\cdot \wedge \tilde{\tau}(\omega)}, \omega') \,P_0(d\omega')\,P_0(d\omega)
\]
for any bounded, progressively measurable functional $g$ on $\Omega\times\Omega$. As $\phi$ is $\cF_\tau$-measurable, we conclude from~\eqref{eq:cauchy2} that
\begin{equation}\label{eq:cauchy3}
  \int_\Omega\int_0^T 1\wedge \left|\hat{\alpha}^m_s(\omega) - \hat{\alpha}^n_s(\omega)\right| \, ds\,P_0(d\omega)
  \leq c_T \left(\frac1m + \frac1n\right).
\end{equation}
This implies that $(\hat{\alpha}^n)$ is Cauchy for the metric $d$.
Let $\hat{\alpha}\in L_{prog}^0(\bar{\Omega}; \mathbb{S})$ be the limit. Then, using again the same formula, we obtain that
\[
  \phi_n(X^\alpha(\omega)) = \hat{\alpha}^n_{\cdot+\tilde{\tau}(\omega)}(\omega\otimes_{\tilde{\tau}} \cdot ) \to \hat{\alpha}_{\cdot+\tilde{\tau}(\omega)}(\omega\otimes_{\tilde{\tau}} \cdot ) \equiv \hat{\alpha}^\omega
\]
with respect to $d$, for $P_0$-a.e.\ $\omega\in\Omega$, after passing to a subsequence. On the other hand,
by~\eqref{eq:phiUnivConv}, we also have $\phi_n(X^\alpha(\omega))\to \phi(X^\alpha(\omega))$ for $P_0$-a.e.\ $\omega\in\Omega$. Hence,
\[
  \hat{\alpha}^\omega = \phi(X^\alpha(\omega))
\]
for $P_0$-a.e.\ $\omega\in \Omega$. In view of Step~2, we have $\phi(X^\alpha(\omega))\in L^1_{loc}(\bar{\Omega};\mathbb{S}^{>0})$ and $P^{\phi(X^\alpha(\omega))}=\nu(X^\alpha(\omega))$ for $P_0$-a.e.\ $\omega\in \Omega$. Hence, $\hat{\alpha}$ satisfies all requirements from Step~1 and the proof is complete.

\section{Proof of Theorem~\ref{th:duality}}\label{se:proofDuality}

We note that one inequality in Theorem~\ref{th:duality} is trivial: if $x\in\R$ and there exists $H\in\cH$ such that $x+ \int_0^T H\,dB \geq \xi$, the supermartingale property stated in the definition of $\cH$ implies that $x\geq E^P[\xi]$ for all $P\in\cP$. Hence, our aim in this section is to show that there exists $H\in\cH$ such that
\begin{equation}\label{eq:proofDualityAim}
  \sup_{P\in\cP} E^P[\xi] + \int_0^T H_u\,dB_u \geq \xi \quad P \mbox{-a.s.} \quad \mbox{for all} \quad P \in \mathcal{P}.
\end{equation}
The line of argument (see also the Introduction) is similar as in \cite{SonerTouziZhang.2010dual} or \cite{NutzSoner.10}; hence, we shall be brief.

We first recall the following known result (e.g., \cite[Theorem~1.5]{JacodYor.77}, \cite[Lemma~8.2]{SonerTouziZhang.2010aggreg}, \cite[Lemma~4.4]{NutzSoner.10}) about the $P$-augmentation $\overline{\F}^P$ of $\F$; it is the main motivation to work with $\cP_S$ as the basic set of scenarios. We denote by $\G^+=\{\cG_{t+}\}_{0\leq t\leq T}$ the minimal right-continuous filtration containing $\G$.

\begin{lemma}\label{le:MRPandVersions}
  Let $P\in\cP_S$. Then $\overline{\F}^P$ is right-continuous and
  in particular contains $\G^+$. Moreover,
  $P$ has the predictable representation property; i.e., for any right-continuous $(\overline{\F}^P,P)$-local martingale $M$ there exists an $\overline{\F}^P$-predictable process $H$ such that
  $M=M_0+\int H\,dB$, $P$-a.s.
\end{lemma}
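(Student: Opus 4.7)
The plan is to reduce the claim to classical facts about the Brownian filtration by exhibiting, under $P$, an auxiliary Brownian motion $W$ whose augmented natural filtration coincides with $\overline{\F}^P$.

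First I would transfer the representation of $P=P^\alpha$ back to $\Omega$. Under $P$ the canonical process $B$ is a continuous local martingale whose quadratic variation is absolutely continuous; let $\beta$ be an $\F$-progressively measurable version of $d\br{B}/dt$. Since $\alpha$ takes values in $\S^{>0}$, a transfer argument (the map $\omega\mapsto (d\br{B}/dt)(\omega)$ pushes forward to $\alpha$ under $P_0$) gives $\beta_t(\omega)\in\S^{>0}$ for $P\times dt$-a.e.\ $(\omega,t)$ and $\int_0^T|\beta_s|\,ds<\infty$ $P$-a.s. This allows me to define, under $P$,
\[
  W_t := \int_0^t \beta_s^{-1/2}\,dB_s,\qquad 0\leq t\leq T.
\]
By Lévy's characterization (the quadratic covariation of $W$ equals $t\, I_d$), $W$ is a $d$-dimensional Brownian motion under $P$, and associativity of the stochastic integral recovers $B_t=\int_0^t\beta_s^{1/2}\,dW_s$. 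Since both processes are $\F$-adapted and obtained from the other via an $\F$-progressively measurable stochastic integral, the $P$-augmentations of the natural filtrations of $B$ and $W$ coincide, i.e.\ $\overline{\F}^P=\overline{\F^W}^P$.

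With this identification, the first two claims follow from classical results. The augmented natural filtration of a Brownian motion is right-continuous (a standard consequence of the Blumenthal $0$--$1$ law together with the fact that Brownian increments are independent), so $\overline{\F}^P$ is right-continuous. For $\G^+\subset \overline{\F}^P$: since $\cF_t^*\subset \cF_t^P\subset \overline{\cF}_t^P$ (the universal completion is contained in any particular completion) and $\cN^\cP\subset \cN^{\{P\}}\subset \overline{\cF}_t^P$, we have $\cG_t\subset \overline{\cF}_t^P$ for every $t$, and taking the right-continuous envelope gives $\cG_{t+}\subset \overline{\cF}_{t+}^P=\overline{\cF}_t^P$.

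For the predictable representation property, let $M$ be a right-continuous $(\overline{\F}^P,P)$-local martingale. Since $\overline{\F}^P$ is the $P$-augmentation of the natural filtration of the Brownian motion $W$, the classical Brownian martingale representation theorem yields an $\overline{\F}^P$-predictable, locally square-integrable process $K$ with $M=M_0+\int K\,dW$, $P$-a.s. Substituting $dW=\beta^{-1/2}\,dB$ and setting $H:=\beta^{-1/2}K$ (which is again $\overline{\F}^P$-predictable) gives $M=M_0+\int H\,dB$, as required. The main technical obstacle is the transfer step from $(\Omega,P_0,\alpha)$ to $(\Omega,P,\beta)$: one must verify carefully that the strict positive definiteness of $\alpha$ is inherited by the quadratic-variation derivative $\beta$ under the pushforward $P^\alpha$, which is what makes the inversion $\beta^{-1/2}$ meaningful and reduces the statement to the purely Brownian case.
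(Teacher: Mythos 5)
The paper does not give a proof of this lemma; it is cited as known from Jacod--Yor, Soner--Touzi--Zhang, and Nutz--Soner. Your reconstruction has the right overall architecture (produce an auxiliary Brownian motion $W$ under $P$ and reduce to the classical Brownian filtration), but it has a genuine gap in the central step.

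The problematic sentence is ``Since both processes are $\F$-adapted and obtained from the other via an $\F$-progressively measurable stochastic integral, the $P$-augmentations of the natural filtrations of $B$ and $W$ coincide.'' This is not a valid inference. The relation $W=\int\beta^{-1/2}\,dB$ with $\beta^{-1/2}$ being $\F$-progressively measurable (where $\F=\F^B$) gives $\F^W\subset\overline{\F^B}^P$, which is the easy direction. For the converse you invoke $B=\int\beta^{1/2}\,dW$, but this only places $B$ inside $\overline{\F^W}^P$ if the integrand $\beta^{1/2}$ is $\overline{\F^W}^P$-progressively measurable. Knowing $\beta$ is $\F^B$-progressively measurable is of no help here: that is exactly what you are trying to reduce. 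As stated, your argument is circular, and the claimed equality $\overline{\F}^P=\overline{\F^W}^P$ is false for general local-martingale laws with absolutely continuous quadratic variation; it is precisely what makes $\cP_S$ special.

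The gap is filled by using the defining structure $P=P^\alpha$ more explicitly. Work under $P_0$: since $\alpha=d\br{X^\alpha}/dt$ $P_0\times dt$-a.e.\ and $\br{X^\alpha}$ is a continuous, increasing, $\F^{X^\alpha}$-adapted process, $\alpha$ admits a version that is $\overline{\F^{X^\alpha}}^{P_0}$-progressively measurable. Since $\alpha$ takes values in $\S^{>0}$, so does $\alpha^{-1/2}$, and the stochastic integral $\int\alpha^{-1/2}\,dX^\alpha=B$ can be computed in the filtration $\overline{\F^{X^\alpha}}^{P_0}$ (using that $X^\alpha$ is a local martingale there and that stochastic integrals are consistent across nested filtrations). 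Hence $\F^B\subset\overline{\F^{X^\alpha}}^{P_0}$, so the two augmented filtrations agree under $P_0$. Transferring this identity to $P$ via the law-identification $(B\text{ under }P)\stackrel{d}{=}(X^\alpha\text{ under }P_0)$ --- which also carries $W$ to the driving Brownian motion $B$ of the $P_0$-picture --- yields $\overline{\F}^P=\overline{\F^W}^P$. The point is that $\beta$ is $\overline{\F^W}^P$-progressively measurable not because it is $\F^B$-progressively measurable, but because on the $P_0$-side $\alpha$ is (up to nullsets) measurable in the natural filtration of $X^\alpha$. Once this is in place, the remainder of your proof (right-continuity via Blumenthal $0$--$1$, $\cG_{t+}\subset\overline{\cF}^P_{t+}=\overline{\cF}^P_t$, PRP via the Brownian representation theorem and the substitution $dW=\beta^{-1/2}\,dB$) is correct.
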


We recall our assumption that $\sup_{P\in\cP} E^P[|\xi|]<\infty$ and that $\xi$ is $\cG_T$-measurable. We also recall from Proposition~\ref{pr:NvH} that the random variable
\[
  \cE_t(\xi)(\omega):=\sup_{P\in\cP(t,\omega)} E^P[\xi^{t,\omega}]
\]
is $\cG_t$-measurable for all $t\in\R_+$. Moreover, we note that $\cE_T(\xi)=\xi$ $P$-a.s.\ for all $P\in\cP$. Indeed, for any fixed $P\in\cP$, Lemma~\ref{le:MRPandVersions} implies that we can find an $\cF_T$-measurable function $\xi'$ which is equal to $\xi$ outside a $P$-nullset $N\in\cF_T$, and now the definition of  $\cE_T(\xi)$ and Galmarino's test show that $\cE_T(\xi)=\cE_T(\xi')=\xi'=\xi$ outside $N$.

\medskip

\emph{Step 1.} We fix $t$ and show that $\sup_{P\in\cP} E^P[|\cE_t(\xi)|]<\infty$. Note that $|\xi|$ need not be upper semianalytic, so that the claim does not follows directly from~\eqref{eq:DPP}. Hence, we make a small detour and first observe that $\cP$ is stable in the following sense: if $P\in\cP$, $\Lambda\in\cF_t$ and $P_1,P_2\in\cP(t;P)$  (notation from Proposition~\ref{pr:NvH}), the measure $\bar{P}$ defined by
\[%
  \bar{P}(A):=E^P \big[P_1(A|\cF_t)\1_\Lambda + P_2(A|\cF_t)\1_{\Lambda^c}\big],\quad A\in\cF
\]%
is again an element of $\cP$. %
Indeed, this follows from (A2) and (A3) as
\begin{align*}
\bar{P}(A)
&= \iint (\1_A)^{t, \omega}(\omega')\, \nu(\omega, d\omega')\, P(d\omega)
\end{align*}
for the kernel $\nu(\omega,d\omega')=P_1^{t, \omega}(d\omega')\,\1_{\Lambda}(\omega) +
P_2^{t, \omega}(d\omega')\,\1_{\Lambda^c}(\omega)$. Following a standard argument, this stability implies that for any $P\in\cP$,
there exist $P_n\in\cP(t;P)$ such that
\[
  E^{P_n}[|\xi|\,|\cF_t] \nearrow {\mathop{\esssup^P}_{P'\in \cP(t;P)}} E^{P'}[|\xi|\,|\cF_t] \quad P\mbox{-a.s.}
\]
Since~\eqref{eq:esssupDPP}, applied with $\tau=T$, yields that
\begin{align*}
  E^P[|\cE_t(\xi)|]
   = E^P\bigg[\bigg|{\mathop{\esssup^P}_{P'\in \cP(t;P)}} E^{P'}[\xi|\cF_t]\bigg|\bigg]
   \leq E^P\bigg[{\mathop{\esssup^P}_{P'\in \cP(t;P)}} E^{P'}[|\xi|\,|\cF_t]\bigg],
\end{align*}
monotone convergence then allows us to conclude that
\begin{align*}
  E^P[|\cE_t(\xi)|]
  \leq \lim_{n\to\infty} E^{P_n}[|\xi|]
  \leq \sup_{P\in\cP}E^{P}[|\xi|]<\infty.
\end{align*}

\emph{Step 2.} We show that the right limit $Y_t:=\cE_{t+}(\xi)$ defines a $(\G^+,P)$-supermartingale for all $P\in\cP$. Indeed, Step~1 and~\eqref{eq:esssupDPP} show that $\cE_t(\xi)$ is an $(\F^*,P)$-supermartingale for all $P\in\cP$. The standard modification theorem for supermartingales \cite[Theorem~VI.2]{DellacherieMeyer.82} then yields that $Y$ is well defined $P$-a.s.\ and that $Y$ is a $(\G^+,P)$-supermartingale for all $P\in\cP$, where the second conclusion uses Lemma~\ref{le:MRPandVersions}. We omit the details; they are similar as in the proof of~\cite[Proposition~4.5]{NutzSoner.10}.

For later use, let us also establish the inequality
\begin{equation}\label{eq:ineqTimeZero}
  Y_0 \leq \sup_{P'\in\cP} E^{P'}[\xi] \quad P \mbox{-a.s.} \quad \mbox{for all} \quad P \in \mathcal{P}.
\end{equation}
Indeed, let $P\in\cP$. Then \cite[Theorem~VI.2]{DellacherieMeyer.82} shows that
\[
 E^P[Y_0|\cF_0]\leq \cE_0(\xi) \quad P\as,
\]
where, of course, we have  $E^P[Y_0|\cF_0]=E^P[Y_0]$ $P$-a.s.\ since $\cF_0=\{\emptyset,\Omega\}$. However, as $Y_0$ is $\cG_{0+}$-measurable and $\cG_{0+}$ is $P$-a.s.\ trivial by Lemma~\ref{le:MRPandVersions}, we also have that $Y_0=E^P[Y_0]$ $P$-a.s. In view of the definition of $\cE_0(\xi)$, the inequality~\eqref{eq:ineqTimeZero} follows.

\medskip

\emph{Step~3.} Next, we construct the process $H\in\cH$. In view of Step~2, we can fix $P\in\cP$ and consider the Doob--Meyer decomposition $Y=Y_0+M^P-K^P$ under $P$, in the filtration $\overline{\F}^P$. By Lemma~\ref{le:MRPandVersions}, the local martingale $M^P$ can be represented as an integral,
$M^P=\int H^P\,dB$, for some $\overline{\F}^P$-predictable integrand $H^P$. The crucial observation (due to \cite{SonerTouziZhang.2010dual}) is that this process can be described via $d\br{Y,B}=H^P\,d\br{B}$, and that, as the quadratic covariation processes can be constructed pathwise by Bichteler's integral~\cite[Theorem~7.14]{Bichteler.81}, this relation allows to define a
process $H$ such that $H=H^P$ $P\times dt$-a.e.\ for all $P\in\cP$. More precisely, since $\br{Y,B}$ is continuous, it is not only adapted to $\G^+$, but also to $\G$, and hence we see by going through the arguments in the proof of \cite[Proposition~4.11]{NutzSoner.10} that $H$ can be obtained as a $\G$-predictable process in our setting. To conclude that $H\in\cH$, note that for every $P\in\cP$, the local martingale $\int H\,dB$ is $P$-a.s.\ bounded from below by the martingale $E^P[\xi|\G]$; hence, on the compact $[0,T]$, it is a supermartingale as a consequence of Fatou's lemma.
Summing up, we have found $H\in\cH$ such that
\[
  Y_0 + \int_0^T H_u\,dB_u \geq Y_T=\cE_{T+}(\xi)=\xi \quad P\mbox{-a.s.} \quad \mbox{for all} \ P \in \cP,
\]
and in view of~\eqref{eq:ineqTimeZero}, this implies~\eqref{eq:proofDualityAim}.

\newcommand{\dummy}[1]{}


\begin{thebibliography}{10}

\bibitem{BertsekasShreve.78}
D.~P. Bertsekas and S.~E. Shreve.
\newblock {\em Stochastic Optimal Control. The Discrete-Time Case.}
\newblock Academic Press, New York, 1978.

\bibitem{Bichteler.81}
K.~Bichteler.
\newblock Stochastic integration and {$L^{p}$}-theory of semimartingales.
\newblock {\em Ann. Probab.}, 9(1):49--89, 1981.

\bibitem{Chung.81}
D.~M. Chung.
\newblock On the local convergence in measure.
\newblock {\em Bull. Korean Math. Soc.}, 18(2):51--53, 1981/82.

\bibitem{DellacherieMeyer.78}
C.~Dellacherie and P.~A. Meyer.
\newblock {\em Probabilities and Potential A}.
\newblock North Holland, Amsterdam, 1978.

\bibitem{DellacherieMeyer.82}
C.~Dellacherie and P.~A. Meyer.
\newblock {\em Probabilities and Potential B}.
\newblock North Holland, Amsterdam, 1982.

\bibitem{DenisMartini.06}
L.~Denis and C.~Martini.
\newblock A theoretical framework for the pricing of contingent claims in the
  presence of model uncertainty.
\newblock {\em Ann. Appl. Probab.}, 16(2):827--852, 2006.

\bibitem{Doob.94}
J.~L. Doob.
\newblock {\em Measure theory}, volume 143 of {\em Graduate Texts in
  Mathematics}.
\newblock Springer-Verlag, New York, 1994.

\bibitem{JacodYor.77}
J.~Jacod and M.~Yor.
\newblock {\'E}tude des solutions extr{\'e}males et repr{\'e}sentation
  int{\'e}grale des solutions pour certains probl{\`e}mes de martingales.
\newblock {\em Z. Wahrscheinlichkeitstheorie und Verw. Gebiete}, 38(2):83--125,
  1977.

\bibitem{Nutz.10Gexp}
M.~Nutz.
\newblock Random {$G$}-expectations.
\newblock {\em To appear in Ann. Appl. Probab.}, 2010.

\bibitem{Nutz.11int}
M.~Nutz.
\newblock Pathwise construction of stochastic integrals.
\newblock {\em Electron. Commun. Probab.}, 17(24):1--7, 2012.

\bibitem{Nutz.11}
M.~Nutz.
\newblock A quasi-sure approach to the control of non-{M}arkovian stochastic
  differential equations.
\newblock {\em Electron. J. Probab.}, 17(23):1--23, 2012.

\bibitem{NutzSoner.10}
M.~Nutz and H.~M. Soner.
\newblock Superhedging and dynamic risk measures under volatility uncertainty.
\newblock {\em SIAM J. Control Optim.}, 50(4):2065--2089, 2012.

\bibitem{NutzVanHandel.12}
M.~Nutz and R.~van Handel.
\newblock Constructing sublinear expectations on path space.
\newblock {\em To appear in Stochastic Process. Appl.}, 2012.

\bibitem{NutzZhang.12}
M.~Nutz and J.~Zhang.
\newblock Optimal stopping under adverse nonlinear expectation and related
  games.
\newblock {\em Preprint arXiv:1212.2140v1}, 2012.

\bibitem{Peng.07}
S.~Peng.
\newblock {$G$}-expectation, {$G$}-{B}rownian motion and related stochastic
  calculus of {I}t{\^o} type.
\newblock In {\em Stochastic Analysis and Applications}, volume~2 of {\em Abel
  Symp.}, pages 541--567, Springer, Berlin, 2007.

\bibitem{Peng.08}
S.~Peng.
\newblock Multi-dimensional {$G$}-{B}rownian motion and related stochastic
  calculus under {$G$}-expectation.
\newblock {\em Stochastic Process. Appl.}, 118(12):2223--2253, 2008.

\bibitem{Peng.10}
S.~Peng.
\newblock Nonlinear expectations and stochastic calculus under uncertainty.
\newblock {\em Preprint arXiv:1002.4546v1}, 2010.

\bibitem{PossamaiRoyerTouzi.13}
D.~Possamai, G.~Royer, and N.~Touzi.
\newblock On the robust superhedging of measurable claims.
\newblock {\em Preprint arXiv:1302.1850v1}, 2013.

\bibitem{Protter.05}
P.~Protter.
\newblock {\em Stochastic Integration and Differential Equations}.
\newblock Springer, New York, 2nd edition, version 2.1, 2005.

\bibitem{SonerTouzi.02b}
H.~M. Soner and N.~Touzi.
\newblock Dynamic programming for stochastic target problems and geometric
  flows.
\newblock {\em J. Eur. Math. Soc. (JEMS)}, 4(3):201--236, 2002.

\bibitem{SonerTouziZhang.2010rep}
H.~M. Soner, N.~Touzi, and J.~Zhang.
\newblock Martingale representation theorem for the {$G$}-expectation.
\newblock {\em Stochastic Process. Appl.}, 121(2):265--287, 2011.

\bibitem{SonerTouziZhang.2010aggreg}
H.~M. Soner, N.~Touzi, and J.~Zhang.
\newblock Quasi-sure stochastic analysis through aggregation.
\newblock {\em Electron. J. Probab.}, 16(2):1844--1879, 2011.

\bibitem{SonerTouziZhang.2010dual}
H.~M. Soner, N.~Touzi, and J.~Zhang.
\newblock Dual formulation of second order target problems.
\newblock {\em Ann. Appl. Probab.}, 23(1):308--347, 2013.

\bibitem{Song.10}
Y.~Song.
\newblock Some properties on {$G$}-evaluation and its applications to
  {$G$}-martingale decomposition.
\newblock {\em Sci. China Math.}, 54(2):287--300, 2011.

\bibitem{StroockVaradhan.79}
D.~Stroock and S.~R.~S. Varadhan.
\newblock {\em Multidimensional Diffusion Processes}.
\newblock Springer, New York, 1979.

\end{thebibliography}
\end{document}